\newtheorem{theorem}{Theorem}[section]
\newtheorem{lemma}[theorem]{Lemma}
\newtheorem{corollary}[theorem]{Corollary}
\theoremstyle{plain}
\newtheorem{definition}[theorem]{Definition}
\newtheorem{example}[theorem]{Example}
\newcommand{\mf}{\mathfrak}
\newcommand{\msf}{\mathsf}
\newcommand{\mbb}{\mathbb}
\newcommand{\mrm}{\mathrm}
\newcommand{\bu}{\bullet}
\newcommand{\mc}{\mathcal}
\newcommand{\imp}{\rightarrow}
\newcommand{\seq}{\Rightarrow}
\newcommand{\bb}{\blacksquare}
\newcommand{\sub}{\subseteq}
\newcommand{\w}{\widehat}
\newcommand{\ve}{\varnothing}
\newcommand{\D}{\Diamond}
\newcommand{\B}{\Box}
\newcommand{\tup}[1]{\langle #1 \rangle}
\newcommand{\den}[1]{\llbracket{#1}\rrbracket}
\title{\Large The Finite Model Property of Quasi-transitive Modal Logic}
\author{
Zhe Lin
\and
Minghui Ma
}
\institute{
Institute of Logic and Cognition, Sun Yat-sen University, Guangzhou, China\\
\email{\{linzhe8,mamh6\}@mail.sysu.edu.cn}
}
\authorrunning{Z. Lin and M. Ma}
\titlerunning{The Finite Model Property of Quasi-transitive Modal Logic}
\begin{document}
\maketitle
\bigskip
\begin{abstract}
The finite model property of quasi-transitive modal logic $\mathsf{K}_2^3=\mathsf{K}\oplus \Box\Box p\rightarrow \Box\Box\Box p$ is established. This modal logic is conservatively extended to the tense logic $\mathsf{Kt}_2^3$. We present a Gentzen sequent calculus $\msf{G}$ for $\mathsf{Kt}_2^3$. The sequent calculus $\msf{G}$ has the finite algebra property by a finite syntactic construction. It follows that $\mathsf{Kt}_2^3$ and $\mathsf{K}_2^3$ have the finite model property.
\end{abstract}

\section{Introduction}
\label{sect:introduction}
Modal reduction principles (MRPs) are modal formulas of the form $\msf{M} p\imp \msf{N}p$ where $\msf{M}, \msf{N}$ are finite (possibly empty) sequences of modal operators $\Box$ or $\D$. Fitch \cite{Fitch} investigated the problem of correspondence between MRPs and first-order properties. Van Benthem \cite{vB1976} proved that every MRP corresponds to a first-order relational property over the class of all transitive frames. Besides the correspondence theory, the {\em finite model property} (FMP) of normal modal logics generated by MRPs is also much concerned in the literature. 

For any normal modal logic $\Lambda$ and a set $\Sigma$ of modal formulas, let $\msf{NExt}(\Lambda)$ be the class of all normal modal logics extending $\Lambda$, and let $\Lambda\oplus\Sigma$ be the normal extension of $\Lambda$ by adding all formulas in $\Sigma$ as axioms. Using the method of canonical formulas, Zakharyashev \cite{Zak1997} proved that all logics in $\msf{NExt}(\msf{K4})$ axiomatized by MRPs have the FMP. However, 
the FMP of normal modal logics axiomatizable by MRPs over the least normal modal logic $\msf{K}$ is a longstanding open problem (cf.~\cite[p.452]{FZ2007}). 
In particular, it is unknown whether all normal modal logic of the form $\msf{K}_n^m =\msf{K}\oplus\Box^n p\imp \Box^m p$ ($n\neq m\geq 1$)
have the FMP. This most intriguing open problem in modal logic was highlighted by Zakharyashev \cite{Zak1997} as follows:
\begin{quote}
{\em Unfortunately, the technical apparatus developed is applicable only to logics with transitive frames, and the situation of extensions of $\msf{K}$ by modal reduction principles, even by axioms $\Box^n p \imp \Box^mp$ still remains unclear. I think at present this is one of the major challenges in completeness theory.}
\end{quote}
This problem has a long history and was traced back to Krister Segerberg in 1970s by Chagrov and Zakharyashev \cite[11.8 Notes]{CZ1997}. It is worth mentioning that Gabbay \cite{Gabbay1972} used a general filtration method to show the FMP of modal logics $\msf{K}\oplus \B p\imp \B^m p$ where $m\geq 0$.

A part of this intriguing open problem is the FMP of all $n$-transitive modal logics of the form $\msf{K}\oplus \B^n p\imp \B^{n+1} p$ (cf.~\cite[Problem 11.2]{CZ1997}). The most well-known example of this open problem is perhaps the FMP of the quasi-transitive modal logic $\msf{K}_2^3 = \msf{K}\oplus \B\B p\imp \B\B\B p$ (cf.~e.g.~\cite{KS2016}). We suggest the name `quasi-transitive modal logic' for $\mathsf{K}_2^3$ because frames for it are `almost transitive'. The aim of the present work is to show that $\mathsf{K}_2^3$ has the FMP. 

Our proof proceeds in the algebraic way. It is well-known that, by the duality between modal algebras and frames, the FMP of a normal modal logic $\Lambda$ is equivalent to the {\em finite algebra property} (FAP), i.e., every formula $\varphi$ which is not provable in $\Lambda$ is refuted by a finite $\Lambda$-algebra (cf.~\cite{Yde2007}). In order to show the FAP of the quasi-transitive modal logic $\mathsf{K}_2^3$, we shall prove the FAP of the tense logic $\mathsf{Kt}_2^3$ which is a conservative extension of $\mathsf{K}_2^3$. The tense logic $\mathsf{Kt}_2^3$ extends the minimal tense logic $\msf{Kt}$ (cf.~\cite{BDV2001}) by adding the axiom $\D\D\D p\imp \D\D p$. The core of the proof is a syntactic construction of finite algebra based on a Gentzen sequent calculus $\msf{G}$ for $\mathsf{Kt}_2^3$. We show that every sequent which is not derivable in $\msf{G}$ is refuted by a finite algebra for $\mathsf{Kt}_2^3$. 
There are two main innovative points in the syntactic construction. First, in the definition of sequent calculus $\msf{G}$, a structural operator $\tup{.}$ for the modal operator $\D$ is introduced (cf.~Definition \ref{sequent}).
Second, in order to show that the finite algebra is a $\mathsf{Kt}_2^3$-algebra, a particular form of interpolation lemma for $\msf{G}$ is required (cf.~Lemma \ref{interpolant}). This kind of interpolation lemma was used by Buszkowski \cite{Bus2011} to show the finite embeddability property of residuated algebras.


\section{A Gentzen sequent calculus}

The language of modal logic consists of a denumerable set of propositional variables $\msf{Prop}$, propositional connectives $\bot, \top, \neg, \wedge, \vee$ and a unary modal operator $\D$.
The set of all modal formulas $\mc{L}_\D$ is defined inductively by the following rule:
\[
\mc{L}_\D\ni \varphi::= p\mid \bot \mid\top\mid \neg\varphi\mid (\varphi_1\wedge\varphi_2) \mid (\varphi_1\vee\varphi_2)\mid \D\varphi,~\mrm{where}~p\in\msf{Prop}.
\]
The tense language is the extension of the modal language by a unary modal operator $\bb$. The set of all tense formulas is denoted by $\mathscr{L}_t$. For any number $k\geq 0$, $\D^n\varphi$ is defined by $\D^0\varphi := \varphi$ and $\D^{n+1}\varphi := \D\D^k\varphi$. The {\em complexity} $c(\varphi)$ of a tense formula $\varphi$ is defined inductively as follows:
\begin{align*}
c(p) &=c(\bot)=c(\top)=0. \\
c(\varphi \odot \psi) &= max\{c(\varphi),c(\psi)\}+1,~\text{where $\odot\in\{\wedge, \vee\}$}.\\
c(\#\varphi) & = c(\varphi)+1,~\text{where $\#\in\{\neg, \D,\bb\}$}.
\end{align*}
The {\em tense formula algebra} is denoted by $\mathscr{T}$.

\begin{definition}{\em
A {\em modal algebra} is $\mf{A} = (A, \wedge, \vee, \neg, 0, 1,\D)$ where $(A, \wedge, \vee, \neg, 0, 1)$ is a Boolean algebra, and $\D$ is a unary operator on $A$ with $\D 0 = 0$ and $\D(a\vee b)=\D a\vee \D b$ for all $a,b\in A$.
A modal algebra $\mf{A} = (A, \wedge, \vee, \neg, 0, 1, \D)$ is {\em quasi-transitive}, if $\D\D\D a \leq \D\D a$ for all $a\in A$.

A {\em tense algebra} is $\mf{A} = (A, \wedge, \vee, \neg, 0, 1,\D, \bb)$ where $(A, \wedge, \vee, \neg, 0, 1)$ is a Boolean algebra, and $\D,\bb$ are unary operators on $A$ such that for all $a,b\in A$:
\begin{center}
(Adj) $\D a\leq b$ if and only if $a\leq \bb b$.
\end{center}
A tense algebra $\mf{A} = (A, \wedge, \vee, \neg, 0, 1, \bb, \D)$ is {\em quasi-transitive}, if $\D\D\D a \leq \D\D a$ for all $a\in A$. Let $\mbb{Q}$ be the class of all quasi-transitive algebras.
}
\end{definition}

\begin{lemma}\label{lem:tense:property}
For any quasi-transitive tense algebra $\mf{A}=(A, \wedge, \vee, \neg, 0, 1, \D,\bb)$, the following hold for all $a, b, c\in A$:
\begin{enumerate}
\item[]$(1)$ $\D0=0$ and $\bb 1=1$.
\item[]$(2)$ $a\leq \bb\D a$ and $\D\bb a\leq a$.
\item[]$(3)$ if $a\leq b$, then $\D a\leq \D b$ and $\bb a\leq \bb b$.
\item[]$(4)$ $\D(a\vee b)=\D a\vee\D b$ and $\bb(a\wedge b)=\bb a\wedge\bb b$.
\item[]$(5)$ $\bb\bb a\leq\bb\bb\bb a$.
\end{enumerate}
\end{lemma}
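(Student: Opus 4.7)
The statement is largely a catalogue of consequences of the adjunction (Adj); only item (5) actually uses the quasi-transitivity axiom $\D\D\D a \leq \D\D a$. The plan is therefore to deploy the standard Galois-connection toolkit for items (1)--(4), and then translate quasi-transitivity across the adjunction for (5).

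For (1) the plan is to feed the trivial inequalities $0 \leq \bb 0$ and $\D 1 \leq 1$ into (Adj) in the appropriate directions, yielding $\D 0 \leq 0$ and $1 \leq \bb 1$. For (2) I would read off the unit $a \leq \bb\D a$ by applying (Adj) to the reflexive inequality $\D a \leq \D a$, and the counit $\D\bb a \leq a$ dually from $\bb a \leq \bb a$. For (3), given $a \leq b$, the chain $a \leq b \leq \bb\D b$ followed by (Adj) yields $\D a \leq \D b$; monotonicity of $\bb$ is symmetric via $\D\bb a \leq a \leq b$. For (4), one inclusion is immediate from monotonicity, while the other follows by observing that $a, b \leq \bb(\D a \vee \D b)$ (each by (Adj) from $\D a, \D b \leq \D a \vee \D b$), so $a \vee b \leq \bb(\D a \vee \D b)$, and one more application of (Adj) gives $\D(a \vee b) \leq \D a \vee \D b$. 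The meet case for $\bb$ is dual, using the counit in place of the unit.

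The interesting item is (5), which converts quasi-transitivity into its dual form. The plan is to shunt the goal $\bb\bb a \leq \bb\bb\bb a$ across (Adj) three times, rewriting it successively as $\D\bb\bb a \leq \bb\bb a$, then $\D\D\bb\bb a \leq \bb a$, and finally $\D\D\D\bb\bb a \leq a$. This last form is then immediate: by quasi-transitivity at $x := \bb\bb a$ we have $\D\D\D\bb\bb a \leq \D\D\bb\bb a$, and two applications of the counit from (2) collapse $\D\D\bb\bb a \leq \D\bb a \leq a$.

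The only real obstacle is clerical---tracking the direction of (Adj) through three successive applications in (5), and not confusing units with counits along the way. It is worth noting that this dualisation is precisely why the paper works with the conservative tense extension $\mathsf{Kt}_2^3$ rather than the pure modal logic $\mathsf{K}_2^3$: the adjunction is what transports properties (3)--(5) to the residual operator $\bb$, and these properties will be indispensable in setting up the sequent calculus $\msf{G}$.
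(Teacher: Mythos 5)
Your proposal is correct and follows essentially the same route as the paper: the one identity the paper proves in detail, $\D(a\vee b)=\D a\vee\D b$, is argued there exactly as you do (monotonicity for one inclusion, then $a\vee b\leq \bb(\D a\vee\D b)$ and (Adj) for the other), and the remaining items, which the paper dismisses as easy, are the standard adjunction manipulations you supply, including the correct triple shunting of $\bb\bb a\leq\bb\bb\bb a$ across (Adj) down to the quasi-transitivity axiom.
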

\begin{proof}
We show only $\D(a\vee b)= \D a\vee \D b$ in (5). Other items can be shown easily.
Clearly $a\leq a\vee b$ and $b\leq a\vee b$. By (4), $\D a\leq \D(a\vee b)$ and $\D b\leq \D(a\vee b)$. Then $\D a\vee\D b\leq \D(a\vee b)$. By (2), $a\leq \bb\D a$. By $\D a\leq \D a\vee\D b$ and (3), $\bb\D a\leq \bb(\D a\vee\D b)$. Then $a\leq \bb(\D a\vee\D b)$. Similarly $b\leq \bb(\D a\vee\D b)$. Then $a\vee b\leq \bb(\D a\vee\D b)$. By (Adj), $\D(a\vee b)\leq \D a\vee \D b$. 
\end{proof}

\begin{definition}\label{sequent}{\em
For any tense formula $\varphi$, 
we define $\tup{\varphi}^n$ by induction on $n\geq 0$ as follows: 
\begin{center}
$\tup{\varphi}^0=\varphi$ and $\tup{\varphi}^{n+1}=\tup{\tup{\varphi}^n}$. 
\end{center}
A {\em formula structure} is an expression of the form $\tup{\varphi}^n$ for some tense formula $\varphi$ and natural number $n\geq 0$. Formula structures are denoted by $\Gamma, \Delta$ etc. 
For any set of formulas $X$, the set of all formula structures generated by $X$ is defined as $FS(X)=\{\tup{\varphi}^n\mid \varphi\in X~\&~n\geq 0\}$.

A {\em sequent} is an expression $\Gamma\seq\psi$ where $\Gamma$ is a formula structure and $\psi$ is a tense formula. Sequents are denoted by $s, t$ etc. with or without subscripts. A {\em sequent rule} is a fraction 
\[
\frac{s_1~\ldots~s_n}{s_0}{(R)}
\]
where $s_1,\ldots, s_n$ are called the {\em premisses} and $s_0$ is called the {\em conclusion} of $(R)$.
}
\end{definition}

\begin{definition}{\em
The sequent calculus $\msf{G}$ for the tense logic $\msf{Kt}_2^3$ consists of the following axiom schemata and rules:

$(1)$ Axiom schemata:
\[
(\mathrm{Id})~\varphi\seq\varphi
\quad
(\mathrm{D})~\varphi\wedge(\psi\vee\chi)\seq(\varphi\wedge\psi)\vee(\varphi\wedge\chi)
\quad
(\top)~\varphi\seq\top
\quad
(\bot)~\tup{\bot}^n\seq\psi
\]
\[
(\mrm{LC})~\varphi\wedge\neg\varphi \seq \bot
\quad
(\mrm{EM})~\top\seq\varphi\vee\neg\varphi
\quad
(\D_3^2)~\D^3\varphi\seq\D^2\varphi
\]

$(2)$ Connective rules:
\[
\frac{\tup{\varphi_i}^n\seq\psi}{\tup{\varphi_1\wedge\varphi_2}^n\seq\psi}{(\wedge{\seq})(i=1,2)}
\quad
\frac{\Gamma\seq\psi_1\quad\Gamma\seq\psi_2}{\Gamma\seq\psi_1\wedge\psi_2}{({\seq}\wedge)}
\]
\[
\frac{\tup{\varphi_1}^n\seq\psi\quad \tup{\varphi_2}^n\seq\psi}{\tup{\varphi_1\vee \varphi_2}^n\seq\psi}{(\vee{\seq})}
\quad
\frac{\Gamma\seq\psi_i}{\Gamma\seq\psi_1\vee\psi_2}{({\seq}\vee)(i=1,2)}
\]

$(3)$ Modal rules:
\[
\frac{\tup{\varphi}^{n+1}\seq\psi}{\tup{\D\varphi}^n\seq \psi}{(\D{\seq})}
\quad
\frac{ \Gamma\seq\psi}{\tup{ \Gamma}\seq\D\psi}{({\seq}\D)}
\]
\[
\frac{\tup{\varphi}^n\seq\psi}{\tup{\bb\varphi}^{n+1}\seq \psi}{(\bb{\seq})}
\quad
\frac{\tup{\Gamma}\seq\psi}{\Gamma\seq\bb\psi}{({\seq}\bb)}
\]


$(4)$ Cut rule:
\[
\frac{\Gamma \Rightarrow \varphi \quad \tup{\varphi}^n \Rightarrow \psi}{\tup{\Gamma}^n \Rightarrow \psi}{(Cut)}
\]

In the axiom schema $(\bot)$ and sequent rules, the number $n\geq 0$ is arbitrary.
A {\em derivation} in $\msf{G}$ is a finite tree of sequents $\mc{D}$ in which each node is either an instance of an axiom schema or derived from child node(s) by a sequent rule. The {\em height} of a derivation $\mc{D}$, denoted by $|\mc{D}|$, is the maximal length of branches in $\mc{D}$.
In a derivation, we use $(R)^n$ to denote $n$ times application of the rule $(R)$. A sequent $\Gamma\seq\psi$ is {\em derivable} in $\msf{G}$, notation $\msf{G}\vdash\Gamma\seq\psi$, if there is a derivation $\mc{D}$ in $\msf{G}$ with root node $\Gamma\seq\psi$.
}
\end{definition}

\begin{example}\label{exam}{\em

$(1)$ For $\odot\in\{\wedge,\vee\}$, if $\msf{G}\vdash \varphi_1\seq\psi_1$ and $\msf{G}\vdash \varphi_2\seq\psi_2$, then $\msf{G}\vdash \varphi_1\odot\varphi_2\seq\psi_1\odot\psi_2$. We have the following derivation:
\[
\AxiomC{$\varphi_1\seq\psi_1$}
\RightLabel{\small $(\wedge{\seq})$}
\UnaryInfC{$\varphi_1\wedge\varphi_2\seq\psi_1$}
\AxiomC{$\varphi_2\seq\psi_2$}
\RightLabel{\small $(\wedge{\seq})$}
\UnaryInfC{$\varphi_1\wedge\varphi_2\seq\psi_2$}
\RightLabel{\small $({\seq}\wedge)$}
\BinaryInfC{$\varphi_1\wedge\varphi_2\seq\psi_1\wedge\psi_2$}
\DisplayProof
\]
By a similar derivation, we have $\msf{G}\vdash \varphi_1\vee\varphi_2\seq\psi_1\vee\psi_2$.

$(2)$ If $\msf{G}\vdash\varphi\seq\psi$, then $\msf{G}\vdash\neg\psi\seq\neg\varphi$. Suppose $\msf{G}\vdash\varphi\seq\psi$. Clearly $\msf{G}\vdash\psi\wedge\neg\psi\seq\varphi$. Then we have the following derivation:
\[
\AxiomC{$\varphi\seq\psi$\quad$\neg\psi\seq\neg\psi$}
\RightLabel{\small $(1)$}
\UnaryInfC{$\varphi\wedge\neg\psi\seq\psi\wedge\neg\psi$}
\AxiomC{$\psi\wedge\neg\psi\seq\neg\varphi$}
\RightLabel{\small $(Cut)$}
\BinaryInfC{$\varphi\wedge\neg\psi\seq\neg\varphi$}
\AxiomC{$\neg\varphi\seq\neg\varphi$}
\RightLabel{\small $(\wedge{\seq})$}
\UnaryInfC{$\neg\varphi\wedge\neg\psi\seq\neg\varphi$}
\RightLabel{\small $(\vee{\seq})$}
\BinaryInfC{$(\varphi\wedge\neg\psi)\vee(\neg\varphi\wedge\neg\psi)\seq\neg\varphi$}
\DisplayProof
\]
By $(\mrm{D})$, $\msf{G}\vdash (\varphi\vee\neg\varphi)\wedge\neg\psi\seq(\varphi\wedge\neg\psi)\vee(\neg\varphi\wedge\neg\psi)$.
By $(Cut)$, $\msf{G}\vdash (\varphi\vee\neg\varphi)\wedge\neg\psi\seq\neg\varphi$. By $(\top)$ and $(1)$, $\msf{G}\vdash \top\wedge\neg\psi\seq(\varphi\vee\neg\varphi)\wedge\neg\psi$. Clearly $\msf{G}\vdash \neg\psi\seq\top\wedge\neg\psi$. By $(Cut)$, $\msf{G}\vdash \neg\psi\seq\neg\varphi$.

$(3)$ If $\msf{G}\vdash\varphi\seq\psi$, then $\msf{G}\vdash\D\varphi\seq\D\psi$ and $\msf{G}\vdash\bb\varphi\seq\bb\psi$. Suppose $\msf{G}\vdash\varphi\seq\psi$. We have the following derivations:
\[
\AxiomC{$\varphi\seq\psi$}
\RightLabel{\small $({\seq}\D)$}
\UnaryInfC{$\tup{\varphi}\seq\D\psi$}
\RightLabel{\small $(\D{\seq})$}
\UnaryInfC{${\D\varphi}\seq\D\psi$}
\DisplayProof
\quad\quad
\AxiomC{$\varphi\seq\psi$}
\RightLabel{\small $(\bb{\seq})$}
\UnaryInfC{$\tup{\bb\varphi}\seq\psi$}
\RightLabel{\small $({\seq}\bb)$}
\UnaryInfC{${\bb\varphi}\seq\bb\psi$}
\DisplayProof
\]

$(4)$ For any $n\geq 0$, $\msf{G}\vdash \tup{\varphi}^n\seq\D^n\varphi$. The case $n=0$ is trivial. Let $n>0$. Starting from the axiom $\varphi\seq\varphi$, by $n$ times application of $({\seq}\D)$, we have $\msf{G}\vdash \tup{\varphi}^n\seq\D^n\varphi$.

$(5)$ $\msf{G}\vdash \D(\varphi\vee\psi)\seq\D\varphi\vee\D\psi$. We have the following derivation:
\[
\AxiomC{$\tup{\varphi}\seq\D\varphi$}
\RightLabel{$({\seq}\vee)$}
\UnaryInfC{$\tup{\varphi}\seq\D\varphi\vee\D\psi$}
\AxiomC{$\tup{\psi}\seq\D\psi$}
\RightLabel{$({\seq}\vee)$}
\UnaryInfC{$\tup{\psi}\seq\D\varphi\vee\D\psi$}
\RightLabel{$(\vee{\seq})$}
\BinaryInfC{$\tup{\varphi\vee\psi}\seq\D\varphi\vee\D\psi$}
\RightLabel{$(\D{\seq})$}
\UnaryInfC{$\D(\varphi\vee\psi)\seq\D\varphi\vee\D\psi$}
\DisplayProof
\]

$(6)$ $\msf{G}\vdash\D\bb\varphi\seq\varphi$. Starting from $\varphi\seq\varphi$, using $(\bb{\seq})$ and $(\D{\seq})$, we have $\msf{G}\vdash\D\bb\varphi\seq\varphi$.
}
\end{example}

\begin{definition}{\em
For any quasi-transitive tense algebra $\mf{A}=(A, \wedge, \vee, \neg, 0, 1, \D, \bb)$, an {\em assignment} in $\mf{A}$ is a 
function $\theta: \msf{Prop}\imp A$. Let $\w{\theta}:\mathscr{L}_t\imp A$ be the homomorphic extension of $\theta$ to the tense formula algebra $\mf{T}$. For any formula structure $\tup{\varphi}^n$, we define $\tau(\tup{\varphi}^n) = \D^n\varphi$. A sequent $\Gamma\seq\psi$ is {\em valid} in $\mbb{Q}$, notation $\Gamma\models_\mbb{Q}\psi$, if $\w{\theta}(\tau(\Gamma))\leq \w{\theta}(\psi)$ for any quasi-transitive tense algebra $\mf{A}$ and assignment $\theta$ in $\mf{A}$.
A sequent rule with premisses $s_1,\ldots,s_n$ and conclusion $s_0$ {\em preserves validity in $\mbb{Q}$}, if $s_0$ is valid in $\mbb{Q}$ whenever $s_i$ for all $1\leq i\leq n$ are valid in $\mbb{Q}$.
}
\end{definition}

A formula $\varphi$ is {\em equivalent} to $\psi$ with respect to $\msf{G}$, notation $\varphi\sim_\msf{G}\psi$, if $\msf{G}\vdash\varphi\seq\psi$ and $\msf{G}\vdash\psi\seq\varphi$. Let $|\varphi|_\msf{G}=\{\chi\in\mathscr{L}\mid \varphi\sim_\msf{G}\chi\}$. For any set of formulas $T$, let $|T|_\msf{G}=\{|\varphi|_\msf{G}\mid \varphi\in T\}$.

\begin{lemma}
The relation $\sim_t$ is a congruence relation on the tense formula algebra $\mathscr{T}$. 
\end{lemma}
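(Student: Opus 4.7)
The plan is to verify the two defining conditions of a congruence separately: first that $\sim_\msf{G}$ is an equivalence relation on $\mathscr{L}_t$, and second that it is compatible with every operation of the tense formula algebra $\mathscr{T}$, namely $\neg$, $\wedge$, $\vee$, $\D$, and $\bb$. (I read the statement's ``$\sim_t$'' as a typo for $\sim_\msf{G}$.)

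For the equivalence-relation part, reflexivity follows immediately from the axiom schema $(\mathrm{Id})$, which gives $\msf{G}\vdash\varphi\seq\varphi$ for every tense formula $\varphi$. Symmetry is built into the definition of $\sim_\msf{G}$, since the definition requires derivability of both $\varphi\seq\psi$ and $\psi\seq\varphi$. Transitivity is the one point that genuinely uses the calculus: given $\varphi\sim_\msf{G}\psi$ and $\psi\sim_\msf{G}\chi$, we apply the cut rule $(Cut)$ with $n=0$ to the derivations of $\varphi\seq\psi$ and $\psi\seq\chi$ (and, symmetrically, of $\chi\seq\psi$ and $\psi\seq\varphi$) to obtain $\varphi\seq\chi$ and $\chi\seq\varphi$.

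For compatibility with the operations, each case reduces to a result already recorded in Example~\ref{exam}. Suppose $\varphi_1\sim_\msf{G}\psi_1$ and $\varphi_2\sim_\msf{G}\psi_2$. Then by Example~\ref{exam}(1), applied in both directions, $\varphi_1\odot\varphi_2\sim_\msf{G}\psi_1\odot\psi_2$ for $\odot\in\{\wedge,\vee\}$. For negation, suppose $\varphi\sim_\msf{G}\psi$; by Example~\ref{exam}(2) applied to $\varphi\seq\psi$ we get $\neg\psi\seq\neg\varphi$, and applied to $\psi\seq\varphi$ we get $\neg\varphi\seq\neg\psi$, hence $\neg\varphi\sim_\msf{G}\neg\psi$. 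For the modal operators, Example~\ref{exam}(3) shows that $\msf{G}\vdash\varphi\seq\psi$ implies $\msf{G}\vdash\D\varphi\seq\D\psi$ and $\msf{G}\vdash\bb\varphi\seq\bb\psi$; applying this in both directions yields $\D\varphi\sim_\msf{G}\D\psi$ and $\bb\varphi\sim_\msf{G}\bb\psi$.

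There is no real obstacle here: every clause has been prepared by the preceding examples, and the only non-trivial step is transitivity, which is handled by a single application of $(Cut)$. The whole proof is therefore a short case analysis assembling facts already derived in Example~\ref{exam}, and the congruence yields a well-defined quotient algebra $\mathscr{T}/{\sim_\msf{G}}$ on which the operations $\wedge,\vee,\neg,\D,\bb$ act coordinatewise on $\sim_\msf{G}$-classes.
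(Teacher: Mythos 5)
Your proof is correct and follows the same route as the paper: the paper likewise dismisses the equivalence-relation part as trivial (your explicit use of $(\mathrm{Id})$ for reflexivity and $(Cut)$ with $n=0$ for transitivity is exactly what is implicit there) and then derives compatibility with $\wedge,\vee,\neg,\D,\bb$ from parts (1), (2) and (3) of Example~\ref{exam}, just as you do. Your reading of $\sim_t$ as the relation $\sim_{\msf{G}}$ defined just before the lemma is also the intended one.
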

\begin{proof}
It is trivial that $\sim_t$ is an equivalence relation on $\mathscr{L}_t$. Suppose $\varphi_1\sim_t \psi_1$ and $\varphi_2\sim_t \psi_2$. By Example \ref{exam} $(1)$,
$\varphi_1\odot\varphi_2\sim_t\psi_1\odot\psi_2$ for $\odot\in\{\wedge,\vee\}$. Suppose $\varphi\sim_t \psi$. By Example \ref{exam} $(2)$ and $(3)$, $\#\varphi\sim_t\#\psi$ for $\#\in\{\neg,\D,\bb\}$. 
\end{proof}

Let $\mf{L}_t$ be the quotient algebra of the tense formula algebra $\mathscr{T}$ under $\sim_t$. One can easily show that $\mf{L}_t$ is quasi-transitive. Moreover, for any $\varphi,\psi\in\mathscr{L}_t$, $|\varphi|_\msf{G}\leq|\psi|_\msf{G}$ if and only if $\msf{G}\vdash\varphi\seq\psi$.

\begin{theorem}
For any sequent $\Gamma\seq\psi$,
$\msf{G}\vdash\Gamma\seq\psi$ if and only if $\mbb{Q}\models \Gamma\seq\psi$.
\end{theorem}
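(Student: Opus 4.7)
The plan is to prove soundness and completeness by standard algebraic arguments, with the Lindenbaum--Tarski algebra $\mf{L}_t$ constructed just before the theorem doing the heavy lifting on the completeness side.

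For the soundness direction ($\Rightarrow$), I would induct on the height of a derivation in $\msf{G}$, verifying that every axiom is valid in $\mbb{Q}$ and every rule preserves validity in $\mbb{Q}$. The propositional axioms are immediate from Boolean algebra; axiom $(\bot)$ reduces to $\D^n 0 = 0$, which follows by iterating the normality clause $\D 0 = 0$ from Lemma \ref{lem:tense:property}(1); and $(\D_3^2)$ is the very definition of $\mbb{Q}$. For the rules, the key calculations rest on the identities $\tau(\tup{\varphi}^n) = \D^n \varphi$ and $\tau(\tup{\Gamma}) = \D\tau(\Gamma)$. Monotonicity (Lemma \ref{lem:tense:property}(3)) handles $(\D{\seq})$ and $({\seq}\D)$; the adjunction (Adj) handles $({\seq}\bb)$; and the identity $\D\bb a \leq a$ from Lemma \ref{lem:tense:property}(2) handles $(\bb{\seq})$. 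Finally, $(Cut)$ is sound because $\D^n$ is monotone, so $\tau(\Gamma) \leq \w{\theta}(\varphi)$ together with $\D^n \w{\theta}(\varphi) \leq \w{\theta}(\psi)$ gives $\D^n \tau(\Gamma) = \tau(\tup{\Gamma}^n) \leq \w{\theta}(\psi)$.

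For the completeness direction ($\Leftarrow$), I would exploit $\mf{L}_t$ and the stated fact that $|\varphi|_\msf{G} \leq |\psi|_\msf{G}$ iff $\msf{G}\vdash\varphi\seq\psi$. Taking the canonical assignment $\theta(p) = |p|_\msf{G}$, a straightforward induction on complexity yields $\w{\theta}(\varphi) = |\varphi|_\msf{G}$ for every tense formula $\varphi$. Now suppose $\mbb{Q}\models \Gamma\seq\psi$ with $\Gamma = \tup{\varphi}^n$. Instantiating in $\mf{L}_t$ under $\theta$ gives $|\D^n\varphi|_\msf{G} = \w{\theta}(\tau(\Gamma)) \leq \w{\theta}(\psi) = |\psi|_\msf{G}$, so $\msf{G}\vdash \D^n\varphi \seq \psi$. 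Combining this with $\msf{G}\vdash \tup{\varphi}^n \seq \D^n\varphi$ from Example \ref{exam}(4) by a single application of $(Cut)$ delivers $\msf{G}\vdash \tup{\varphi}^n \seq \psi$.

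The main obstacle is the underlying claim that $\mf{L}_t$ genuinely lies in $\mbb{Q}$. The Boolean structure follows from $(\mathrm{D})$, $(\mrm{LC})$, $(\mrm{EM})$ together with $(\top)$ and $(\bot)$. The normality equations $\D 0 = 0$ and $\D(a\vee b) = \D a \vee \D b$ are obtained from Example \ref{exam}(3) and (5) together with $(\bot)$ and $(\D{\seq})$. Quasi-transitivity is handed to us by $(\D_3^2)$. The subtle point is (Adj): one must syntactically derive $\varphi \seq \bb\D\varphi$ from $\varphi \seq \varphi$ using $({\seq}\D)$ followed by $({\seq}\bb)$, pair it with Example \ref{exam}(6) $\D\bb\varphi \seq \varphi$, and combine these with the monotonicity derivations of Example \ref{exam}(3) via $(Cut)$ to move freely between $\D\varphi \seq \psi$ and $\varphi \seq \bb\psi$. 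Once adjunction is in place, $\mf{L}_t \in \mbb{Q}$ and the completeness argument closes.
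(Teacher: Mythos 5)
Your proposal is correct and follows essentially the same route as the paper: soundness by induction on derivations using the algebraic properties of quasi-transitive tense algebras, and completeness via the Lindenbaum--Tarski algebra $\mf{L}_t$ with the canonical assignment $\theta(p)=|p|_\msf{G}$, reducing $\tup{\varphi}^n\seq\psi$ to $\D^n\varphi\seq\psi$ through $(Cut)$ and Example 2.5(4). The only difference is presentational (you argue completeness directly where the paper argues by contraposition, and you spell out the verification that $\mf{L}_t\in\mbb{Q}$, which the paper leaves as an easy remark).
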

\begin{proof}
Suppose $\msf{G}\vdash\Gamma\seq\psi$. All axioms are obviously valid in $\mbb{Q}$. One can show that all sequent rules of $\msf{G}$ preserve validity in 
$\mbb{Q}$. For $(Cut)$, assume $\Gamma\models_\mbb{Q}\varphi$ and $\tup{\varphi}^n\models_\mbb{Q}\psi$. Let $\mf{A}$ be any algebra in $\mbb{Q}$ and $\theta$ be an assignment in $\mf{A}$.
Then $\w{\theta}(\tau(\Gamma))\leq \w{\theta}(\varphi)$ and 
$\w{\theta}(\D^n\varphi)\leq\w{\theta}(\psi)$. Then $\D^n\w{\theta}(\tau(\Gamma))\leq \D^n\w{\theta}(\varphi)$ and 
$\D^n\w{\theta}(\varphi)\leq\w{\theta}(\psi)$. Hence $\D^n\w{\theta}(\tau(\Gamma))\leq \w{\theta}(\psi)$.
Then $\tup{\Gamma}^n\models_\mbb{Q} \psi$. The other cases can be shown easily. Hence $\mbb{Q}\models \Gamma\seq\psi$.
Suppose $\msf{G}\not\vdash\Gamma\seq\psi$. By $\msf{G}\vdash \Gamma\seq\tau(\Gamma)$ and $(Cut)$, $\msf{G}\not\vdash \tau(\Gamma)\seq\psi$. Hence $|\tau(\Gamma)|\not\leq |\psi|$. Let $\theta$ be the assignment in $\mf{L}_t$ with ${\theta}(p)=|p|$ for each $p\in\msf{Prop}$. One can easily show by induction on the complexity of $\varphi$ that $\w{\theta}(\varphi)=|\varphi|$. Then $\w{\theta}(\tau(\Gamma))\not\leq\w{\theta}(\psi)$. Hence $\mbb{Q}\not\models\Gamma\seq\psi$.
\end{proof}

\section{Finite Model Property}
In this section, we prove the finite algebra property (FAP) of the sequent calculus $\msf{G}$, i.e., if $\msf{G}\not\vdash\Gamma\seq\psi$, there is a finite pretransitive tense algebra that refutes $\Gamma\seq\psi$. The FMP of $\msf{Kt}_2^3$ and $\msf{K}_2^3$ is derived from the FAP.

\begin{definition}\label{def:T}{\em 
For any set of tense formulas $X$ with $\top,\bot\in X$, the sets $X^b$ and $X^\D$ are defined as follows:
\begin{itemize}
\item $X^b$ is the smallest set of tense formulas such that $X\sub X^b$ and $X^b$ is closed under the operations $\neg$, $\wedge$ and $\vee$.
\item $X^\D=\{\D^k \varphi\mid \varphi\in X~\&~0\leq k\leq 3\}$.
\end{itemize}
For any finite set of tense formulas $T$ with $\top,\bot\in T$, let $T^\circ = (T^\D)^b$ and $T^\bu = T^\circ\setminus T^\D$.
}
\end{definition}

In this section, we stipulate that $T$ is a finite set of tense formulas with $\top,\bot\in T$. Obviously $T\sub T^\D$.
A sequent $\Gamma\seq \psi$ is $T^\circ$-{\em derivable} in $\msf{G}$, notation $\msf{G}\vdash\Gamma\seq_{T^\circ}\psi$, if there is a derivation $\mc{D}$ of $\Gamma\seq \psi$ in $\msf{G}$ such that all formulas in $\mc{D}$ belong to $T^\circ$. 

\begin{lemma}\label{rule:R23}
For any $n\geq 0$, if $\msf{G}\vdash \tup{\varphi}^{n+2}\seq_{T^\circ}\psi$ and $\varphi\in T^\D$, then $\msf{G}\vdash\tup{\varphi}^{n+3}\seq_{T^\circ}\psi$.
\end{lemma}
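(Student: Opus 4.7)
The plan is to argue by case analysis on the form of $\varphi$. Since $\varphi\in T^\D$, write $\varphi=\D^j\chi$ with $\chi\in T$ and $0\leq j\leq 3$. The target semantic fact $\D^{n+3}\varphi\leq\D^{n+2}\varphi\leq\psi$ is immediate from $(\D_3^2)$ and the hypothesis; the entire challenge is to realise it by a derivation whose every formula stays inside $T^\circ=(T^\D)^b$.

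The crux, and the main obstacle, is that naively instantiating $(\D_3^2)$ at $\varphi$ would introduce $\D^3\varphi$ and $\D^2\varphi$, which have $\D$-depth $j+3$ and $j+2$ over $T$ and therefore escape $T^\D$ whenever $j\geq 1$. I propose to instantiate the axiom at $\chi$ instead: $\D^3\chi\seq\D^2\chi$ has both sides in $T^\D$ regardless of $j$, and in terms of $\varphi$ it reads $\D^{3-j}\varphi\seq\D^{2-j}\varphi$ when $j\leq 2$ and $\varphi\seq\D^2\chi$ when $j=3$. Using this, I build a short \emph{bridge sequent} case by case, ensuring every formula that appears has the shape $\D^k\chi$ with $k\leq 3$ and hence lies in $T^\D$. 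For $j=0$, three applications of $({\seq}\D)$ starting from $\varphi\seq\varphi$ give $\tup{\varphi}^3\seq\D^3\varphi$, and a cut against the axiom yields the bridge $\tup{\varphi}^3\seq\D^2\varphi$. For $j=1$, an analogous two-step construction followed by a cut with $\D^2\varphi\seq\D\varphi$ produces $\tup{\varphi}^2\seq\D\varphi$. For $j=2$, one step of $({\seq}\D)$ followed by a cut with $\D\varphi\seq\varphi$ produces $\tup{\varphi}\seq\varphi$. For $j=3$, a single application of $({\seq}\D)$ to the axiom $\varphi\seq\D^2\chi$ produces $\tup{\varphi}\seq\D\D^2\chi=\tup{\varphi}\seq\varphi$, since $\D\D^2\chi=\D^3\chi=\varphi$.

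Finally, I combine each bridge with the hypothesis via $(\mathit{Cut})$. For $j\leq 1$, apply $(\D{\seq})$ exactly $(2-j)$ times to $\tup{\varphi}^{n+2}\seq\psi$ to obtain $\tup{\D^{2-j}\varphi}^{n+j}\seq\psi$; the only new formulas introduced are $\D^i\varphi$ with $i\leq 2-j$, which stay in $T^\D$ by the choice of bound. Cutting this against the bridge $\tup{\varphi}^{3-j}\seq\D^{2-j}\varphi$ gives $\tup{\tup{\varphi}^{3-j}}^{n+j}\seq\psi=\tup{\varphi}^{n+3}\seq\psi$. For $j\in\{2,3\}$ one simply cuts the bridge $\tup{\varphi}\seq\varphi$ against $\tup{\varphi}^{n+2}\seq\psi$, yielding $\tup{\tup{\varphi}}^{n+2}\seq\psi=\tup{\varphi}^{n+3}\seq\psi$. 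The whole argument is structured around respecting the $T^\circ$-bound, which is handled uniformly by always invoking $(\D_3^2)$ at the base formula $\chi$ and by controlling the number of $(\D{\seq})$-applications preceding the final cut.
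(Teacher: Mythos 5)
Your proof is correct and rests on the same key idea as the paper's: instantiate $(\D_3^2)$ at the base formula $\chi\in T$ rather than at $\varphi$, so that $\D^2\chi,\D^3\chi$ stay in $T^\D$, and then rearrange with $(Cut)$, $({\seq}\D)$ and $(\D{\seq})$. The paper accomplishes this uniformly in $j$ (it first cuts the hypothesis down to $\tup{\chi}^{j+n+2}\seq\psi$, works at the level of $\chi$, and restores the $j$ diamonds with $(\D{\seq})^j$ at the end), whereas you split into four cases; the content is the same.
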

\begin{proof}
Assume $\msf{G}\vdash\tup{\varphi}^{n+2}\seq_{T^\circ}\psi$. Let $\varphi=\D^k\chi$ for some formula $\chi\in T$ where $k\geq 0$. Clearly $\msf{G}\vdash \tup{\chi}^k\seq_{T^\circ}\D^k\chi$.
We have the following derivation:
\[
\AxiomC{$\chi\seq_{T^\circ}\chi$}
\RightLabel{$({\seq}\D)^3$}
\UnaryInfC{$\tup{\chi}^3\seq_{T^\circ}\D^3\chi$}
\AxiomC{$\D^3\chi\seq_{T^\circ}\D^2\chi$}
\AxiomC{$\tup{\chi}^k\seq_{T^\circ}\D^k\chi$\quad$\tup{\D^k\chi}^{n+2}\seq_{T^\circ}\psi$}
\RightLabel{$(Cut)$}
\UnaryInfC{$\tup{\chi}^{k+n+2}\seq_{T^\circ}\psi$}
\RightLabel{$(\D{\seq})^2$}
\UnaryInfC{$\tup{\D^2\chi}^{k+n}\seq_{T^\circ}\psi$}
\RightLabel{$(Cut)$}
\BinaryInfC{$\tup{\D^3\chi}^{k+n}\seq_{T^\circ}\psi$}
\RightLabel{$(Cut)$}
\BinaryInfC{$\tup{\chi}^{k+n+3}\seq_{T^\circ} \psi$}
\RightLabel{$(\D{\seq})^{k}$}
\UnaryInfC{$\tup{\D^k\chi}^{n+3}\seq_{T^\circ} \psi$}
\DisplayProof
\]
Note that $\D^3\chi, \D^2\chi\in T^\circ$ since $\chi\in T$.
Hence $\msf{G}\vdash\tup{\varphi}^{n+3}\seq_{T^\circ}\psi$.
\end{proof}

\begin{lemma}[Interpolation]\label{interpolant}
For any set of tense formulas $T$ and $n\geq m\geq 0$, if $\msf{G}\vdash\tup{\varphi}^n\seq_{T^\circ}\psi$, there exists a formula $\gamma\in T^\circ$ with $\msf{G}\vdash\tup{\varphi}^m\seq_{T^\circ}\gamma$ and $\msf{G}\vdash\tup{\gamma}^{n-m}\seq_{T^\circ}\psi$.
\end{lemma}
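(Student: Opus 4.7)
The plan is to prove the lemma by induction on the height $|\mc{D}|$ of a $T^\circ$-derivation of $\tup{\varphi}^n\seq\psi$ in $\msf{G}$, exhibiting an interpolant $\gamma\in T^\circ$ for each admissible $m$ with $0\leq m\leq n$. The central resource is that $T^\circ=(T^\D)^b$ is closed under $\neg,\wedge,\vee$, so Boolean combinations of interpolants obtained from the induction hypothesis remain in $T^\circ$; moreover, every formula occurring in $\mc{D}$ lies in $T^\circ$ by the definition of $T^\circ$-derivability, which is what makes the principal formula of the last rule available as a candidate interpolant at boundary positions.

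For the base cases, all axioms except $(\bot)$ have the form of a single formula on the left, forcing $n=m=0$, so the trivial choice $\gamma=\varphi$ works. For $(\bot)$ with $\tup{\bot}^n\seq\psi$, the formula $\gamma=\bot$ serves at every $m$: both required sequents $\tup{\bot}^m\seq\bot$ and $\tup{\bot}^{n-m}\seq\psi$ are instances of $(\bot)$, and $\bot\in T^\circ$.

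For the inductive step, most rules are routine. The rules $(\wedge\seq)$, $(\vee\seq)$, $(\seq\wedge)$, $(\seq\vee)$ are handled by applying the induction hypothesis to the premise(s) at position $m$ and, when needed, combining interpolants as $\gamma_1\wedge\gamma_2$ or $\gamma_1\vee\gamma_2$, which remain in $T^\circ$ by Boolean closure. For the modal rules $(\D\seq)$ and $(\bb\seq)$, which shift the bracket count on the left, I split on a boundary case: at $m=0$ in $(\D\seq)$ (respectively at $m=0$ in $(\bb\seq)$), take $\gamma$ to be the principal formula of the rule ($\D\alpha$ or $\bb\alpha$), which is already in $T^\circ$; otherwise invoke the induction hypothesis on the premise at a shifted position ($m+1$ for $\D$, $m-1$ for $\bb$) and re-apply the same rule to obtain the left half. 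The cases $(\seq\D)$ and $(\seq\bb)$ are dual: for non-boundary $m$, apply the induction hypothesis to the premise at position $m$ and re-apply the rule to the right half, while in $(\seq\D)$ the boundary case $m=n$ uses $\gamma=\D\psi'$ with $\psi'$ the premise's right-hand side.

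The only case requiring real bookkeeping is $(Cut)$, which I expect to be the main though not severe obstacle. With premises $\Gamma\seq\varphi'$ and $\tup{\varphi'}^k\seq\psi$ and conclusion $\tup{\Gamma}^k\seq\psi$, write $\Gamma=\tup{\chi}^j$ so the conclusion has the form $\tup{\chi}^{j+k}\seq\psi$ with $n=j+k$. For interpolation at $m$, split on whether $m\leq j$ or $m>j$. In the first case, apply the induction hypothesis to the first premise at $m$ to obtain $\gamma\in T^\circ$ with $\tup{\chi}^m\seq_{T^\circ}\gamma$ and $\tup{\gamma}^{j-m}\seq_{T^\circ}\varphi'$, then cut with the second premise to yield $\tup{\gamma}^{n-m}\seq_{T^\circ}\psi$. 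In the second case, apply the induction hypothesis to the second premise at $m-j$ to obtain $\gamma\in T^\circ$ with $\tup{\varphi'}^{m-j}\seq_{T^\circ}\gamma$ and $\tup{\gamma}^{n-m}\seq_{T^\circ}\psi$, then cut the first premise with the former to yield $\tup{\chi}^m\seq_{T^\circ}\gamma$.
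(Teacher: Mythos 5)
Your proposal is correct and follows essentially the same route as the paper: induction on the height of the $T^\circ$-derivation, combining interpolants by $\wedge$ or $\vee$ for the branching connective rules, shifting the position by one for the modal rules, and splitting the $(Cut)$ case according to whether $m$ falls on the left or right premiss. The only cosmetic difference is that the paper dispatches the boundary cases $m=0$ and $m=n$ once at the outset (taking $\varphi$ or $\psi$ as interpolant), whereas you handle them rule by rule.
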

\begin{proof}
The required formula $\gamma\in T^\circ$ is called an interpolant. If $m=n$, we choose $\psi\in T^\circ$ as a required interpolant. If $m=0$, we choose $\varphi\in T^\circ$ as a required interpolant. Let $n>m>0$. Assume $\msf{G}\vdash\tup{\varphi}^n\seq_{T^\circ}\psi$. 
There is a derivation $\mc{D}$ of $\tup{\varphi}^n\seq_{T^\circ}\psi$ in $\msf{G}$. The proof proceeds by induction on the height $|\mc{D}|$. Suppose $|\mc{D}|=0$. Then $\varphi=\bot$. We choose $\bot$ as a required interpolant. Suppose $|\mc{D}|>0$. 
Let $\tup{\varphi}^n\seq_{T^\circ}\psi$ be obtained by a rule $(R)$.

(1) $(R)$ is a connective rule. One can get the required interpolant by induction hypothesis and the rule $(R)$. We have the following cases:

(1.1) Let $(R)$ be $(\wedge{\seq})$ and the derivation end with 
\[
\frac{\tup{\varphi_i}^n\seq_{T^\circ}\psi}{\tup{\varphi_1\wedge \varphi_2}^n\seq_{T^\circ}\psi}{(\wedge{\seq})}
\]
where $\varphi=\varphi_1\wedge\varphi_2$ and $i=1,2$.
By induction hypothesis, there is a formula $\gamma\in T^\circ$ with
(i) $\msf{G}\vdash\tup{\varphi_i}^m\seq_{T^\circ}\gamma$ and (ii) $\msf{G}\vdash\tup{\gamma}^{n-m}\seq_{T^\circ}\psi$.
By (i) and $(\wedge{\seq})$, $\msf{G}\vdash\tup{\varphi_1\wedge\varphi_2}^m\seq_{T^\circ}\gamma$. 
Hence $\gamma$ is a required interpolant.

(1.2) Let $(R)$ be $({\seq}\wedge)$ and the derivation end with 
\[
\frac{\tup{\varphi}^n\seq_{T^\circ}\psi_1\quad \tup{\varphi}^n\seq_{T^\circ}\psi_2}{\tup{\varphi}^n\seq_{T^\circ}\psi_1\wedge\psi_2}{({\seq}\wedge)}
\]
By induction hypothesis, there are formulas $\chi_1, \chi_2\in T^\circ$ with
(i) $\msf{G}\vdash\tup{\varphi}^m\seq_{T^\circ}\chi_1$;
(ii) $\msf{G}\vdash\tup{\chi_1}^{n-m}\seq_{T^\circ}\psi_1$;
(iii) $\msf{G}\vdash\tup{\varphi}^m\seq_{T^\circ}\chi_2$;
(iv) $\msf{G}\vdash\tup{\chi_2}^{n-m}\seq_{T^\circ}\psi_2$.
By (i) and (iii), using $({\seq}\wedge)$, we have $\msf{G}\vdash\tup{\varphi}^m\seq_{T^\circ}\chi_1\wedge\chi_2$. 
Since $\msf{G}\vdash\chi_1\wedge\chi_2\seq_{T^\circ}\chi_1$ and $\msf{G}\vdash\chi_1\wedge\chi_2\seq_{T^\circ}\chi_2$, by (ii) and (iv),
using $(Cut)$, we have $\msf{G}\vdash\tup{\chi_1\wedge\chi_2}^{n-m}\seq_{T^\circ}\psi_1$ and $\msf{G}\vdash\tup{\chi_1\wedge\chi_2}^{n-m}\seq_{T^\circ}\psi_2$. By $({\seq}\wedge)$, $\msf{G}\vdash\tup{\chi_1\wedge\chi_2}^{n-m}\seq_{T^\circ}\psi_1\wedge\psi_2$.
Note that $\chi_1\wedge\chi_2\in T^\circ$ since $\chi_1,\chi_2\in T^\circ$. Hence $\chi_1\wedge\chi_2$ is a required interpolant.

(1.3) Let $(R)$ be $(\vee{\seq})$ and the derivation end with 
\[
\frac{\tup{\varphi_1}^n\seq_{T^\circ}\psi\quad \tup{\varphi_2}^n\seq_{T^\circ}\psi}{\tup{\varphi_1\vee \varphi_2}^n\seq_{T^\circ}\psi}{(\vee{\seq})}
\]
where $\varphi=\varphi_1\vee\varphi_2$.
By induction hypothesis, there are formulas $\gamma_1, \gamma_2\in T^\circ$ with
(i) $\msf{G}\vdash\tup{\varphi_1}^m\seq_{T^\circ}\gamma_1$;
(ii) $\msf{G}\vdash\tup{\gamma_1}^{n-m}\seq_{T^\circ}\psi$;
(iii) $\msf{G}\vdash\tup{\varphi_2}^m\seq_{T^\circ}\gamma_2$;
(iv) $\msf{G}\vdash\tup{\gamma_2}^{n-m}\seq_{T^\circ}\psi$.
By (i) and (iii), applying $({\seq}\vee)$ and $(\vee{\seq})$, we have $\msf{G}\vdash\tup{\varphi_1\vee\varphi_2}^m\seq_{T^\circ}\gamma_1\vee\gamma_2$. By (ii) and (iv), using $({\seq}\vee)$, $\msf{G}\vdash\tup{\gamma_1\vee\gamma_2}^{n-m}\seq_{T^\circ}\psi$. Note that $\gamma_1\vee\gamma_2\in T^\circ$ since $\gamma_1,\gamma_2\in T^\circ$. Hence $\chi_1\vee\chi_2$ is a required interpolant.

(1.4) Let $(R)$ be $({\seq}\vee)$ and the derivation end with 
\[
\frac{\tup{\varphi}^n\seq_{T^\circ}\psi_i}{\tup{\varphi}^n\seq_{T^\circ}\psi_1\vee\psi_2}{({\seq}\vee)}
\]
where $\psi=\psi_1\vee\psi_2$ and $i=1,2$.
By induction hypothesis, there is a formula $\chi\in T^\circ$ with
(i) $\msf{G}\vdash\tup{\varphi}^m\seq_{T^\circ}\chi$ and
(ii) $\msf{G}\vdash\tup{\chi}^{n-m}\seq_{T^\circ}\psi_i$.
By (ii) and $({\seq}\vee)$, we have $\msf{G}\vdash\tup{\chi}^{n-m}\seq_{T^\circ}\psi_1\vee\psi_2$. Hence $\chi$ is a required interpolant.

(2) $(R)$ is a modal rule. We have the following cases:

(2.1) $(R)$ is $(\D{\seq})$. Let the derivation end with
\[
\frac{\tup{\chi}^{n+1}\seq_{T^\circ}\psi}{\tup{\D\chi}^n\seq_{T^\circ} \psi}{(\D{\seq})}
\]
where $\varphi=\D\chi$. By induction hypothesis, (i) $\msf{G}\vdash\tup{\chi}^{m+1}\seq_{T^\circ}\gamma$ and (ii) $\msf{G}\vdash\tup{\gamma}^{n-m}\seq_{T^\circ}\psi$ for some $\gamma\in T^\circ$.
By (i) and $(\D{\seq})$, $\msf{G}\vdash\tup{\D\chi}^{m}\seq_{T^\circ}\gamma$. Then $\gamma$ is a required interpolant.

(2.2) $(R)$ is $({\seq}\D)$. Let the derivation end with
\[
\frac{\tup{\varphi}^{n-1}\seq\chi}{\tup{\varphi}^n\seq\D\chi}{({\seq}\D)}
\]
where $\psi=\D\chi$. By induction hypothesis, (i) $\msf{G}\vdash\tup{\varphi}^{m}\seq_{T^\circ}\gamma$ and (ii) $\msf{G}\vdash\tup{\gamma}^{n-m-1}\seq_{T^\circ}\chi$ for some $\gamma\in T^\circ$.
By (ii) and $({\seq}\D)$, $\msf{G}\vdash\tup{\gamma}^{n-m}\seq_{T^\circ}\D\chi$. Then $\gamma$ is a required interpolant.

(2.3) $(R)$ is $(\bb{\seq})$. Let the derivation end with
\[
\frac{\tup{\chi}^{n-1}\seq\psi}{\tup{\bb\chi}^{n}\seq \psi}{(\bb{\seq})}
\]
where $\varphi=\bb\chi$. By induction hypothesis, (i) $\msf{G}\vdash\tup{\chi}^{m-1}\seq_{T^\circ}\gamma$ and (ii) $\msf{G}\vdash\tup{\gamma}^{n-m}\seq_{T^\circ}\psi$ for some $\gamma\in T^\circ$.
By (i) and $(\bb{\seq})$, $\msf{G}\vdash\tup{\bb\chi}^{m}\seq_{T^\circ}\gamma$. Then $\gamma$ is a required interpolant.

(2.4) $(R)$ is $({\seq}\bb)$. Let the derivation end with
\[
\frac{\tup{\varphi}^{n+1}\seq\chi}{\tup{\varphi}^{n}\seq\bb\chi}{({\seq}\bb)}
\]
where $\psi=\bb\chi$. By induction hypothesis, (i) $\msf{G}\vdash\tup{\varphi}^{m}\seq_{T^\circ}\gamma$ and (ii) $\msf{G}\vdash\tup{\gamma}^{n-m+1}\seq_{T^\circ}\chi$ for some $\gamma\in T^\circ$.
By (ii) and $({\seq}\bb)$, $\msf{G}\vdash\tup{\gamma}^{n-m}\seq_{T^\circ}\bb\chi$. Then $\gamma$ is a required interpolant.

(3) $(R)$ is $(Cut)$. Let the derivation end with
\[
\frac{\tup{\varphi}^i\seq_{T^\circ}\chi\quad\tup{\chi}^j\seq_{T^\circ}\psi}{\tup{\varphi}^n\seq_{T^\circ}\psi}{(Cut)}
\]
where $i+j=n$. Suppose $m\leq i$. By induction hypothesis, (i) $\msf{G}\vdash\tup{\varphi}^{m}\seq_{T^\circ}\gamma$ and (ii) $\msf{G}\vdash\tup{\gamma}^{i-m}\seq_{T^\circ}\chi$ for some $\gamma\in T^\circ$. By the right premiss of $(Cut)$ and (ii), using $(Cut)$,  $\msf{G}\vdash\tup{\gamma}^{n-m}\seq_{T^\circ}\psi$. Then $\gamma$ is a required interpolant. Suppose $m>i$. By induction hypothesis, (iii) $\msf{G}\vdash\tup{\chi}^{m-i}\seq_{T^\circ}\gamma'$ and (iv) $\msf{G}\vdash\tup{\gamma'}^{n-m}\seq_{T^\circ}\chi$ for some $\gamma'\in T^\circ$. By the left premiss of $(Cut)$ and (iii), using $(Cut)$, $\msf{G}\vdash\tup{\varphi}^{m}\seq_{T^\circ}\gamma'$. Then $\gamma'$ is a required interpolant.
\end{proof}

\begin{definition}{\em
Let $FS(T^\D), FS(T^\circ)$ and $FS(T^\bu)$ be sets of all formula structures generated by $T^\D, T^\circ$ and $T^\bu$ respectively. For any $\varphi\in T^\circ$, we define
\[
G(\varphi) = \{\tup{\chi}^n\in FS(T^\D)\mid  \msf{G}\vdash\tup{\chi}^n\seq_{T^\circ}\varphi\}~\text{and}~
\den{\varphi} = G(\varphi) \cup FS(T^\bu).
\]
Let $\den{T^\circ} = \{\den{\varphi}\mid \varphi\in T^\circ\}$. We define operations $\bot^\circ, \top^\circ, \neg^\circ, \wedge^\circ$ and $\vee^\circ$ on $\den{T^\circ}$ as follows:
\[
\top^\circ = \den{\top},\quad
\bot^\circ = \den{\bot},\quad
\neg^\circ\den{\varphi} = \den{\neg\varphi},\quad
\den{\varphi}\wedge^\circ\den{\psi} = \den{\varphi\wedge\psi},\quad
\den{\varphi}\vee^\circ\den{\psi} = \den{\varphi\vee\psi}.
\]
Let $\mf{B}(T^\circ) = (\den{T^\circ}, \wedge^\circ, \vee^\circ, \neg^\circ, \top^\circ, \bot^\circ)$. The binary relation $\leq^\circ$ on $\den{T^\circ}$ is defined as follows: $\den{\varphi}\leq^\circ \den{\psi}$ if and only if $\den{\varphi}\wedge^\circ\den{\psi}=\den{\varphi}$.
}\end{definition}

\begin{lemma}\label{lem:T:inclusion}
For any $\varphi,\psi\in T^\circ$, the following hold:

$(1)$ $\den{\varphi}\sub\den{\psi}$ if and only if $G(\varphi)\sub G(\psi)$.

$(2)$ if $\msf{G}\vdash \varphi\seq_{T^\circ}\psi$, then $\den{\varphi}\sub\den{\psi}$.

$(3)$ $\den{T^\circ}$ is finite.

$(4)$ $\den{\varphi}\leq^\circ \den{\psi}$ if and only if $\den{\varphi}\sub \den{\psi}$.

$(5)$ $\den{\varphi\wedge\psi}=\den{\varphi}\cap \den{\psi}$.
\end{lemma}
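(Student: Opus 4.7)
My plan is to treat parts (1), (2), (4), (5) as bookkeeping that follows either from the set-theoretic shape of $\den{\varphi}$ or from a single rule application, and to isolate (3) as the one genuine finiteness step.

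First I would record the structural observation that $G(\varphi)\sub FS(T^\D)$, that $FS(T^\bu)$ is a fixed summand of every $\den{\varphi}$, and that $FS(T^\D)\cap FS(T^\bu)=\emptyset$ (since the leading formula of a structure $\tup{\chi}^n$ lies either in $T^\D$ or in $T^\bu$ but not both, $T^\bu$ being the complement). From this, $\den{\varphi}\cap FS(T^\D)=G(\varphi)$, and (1) falls out both ways: intersecting both sides of $\den{\varphi}\sub\den{\psi}$ with $FS(T^\D)$ gives $G(\varphi)\sub G(\psi)$, and conversely unioning both sides of $G(\varphi)\sub G(\psi)$ with $FS(T^\bu)$ recovers $\den{\varphi}\sub\den{\psi}$. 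Part (2) is then one $(Cut)$ with cut-index $n=0$: reading $\varphi\seq_{T^\circ}\psi$ as $\tup{\varphi}^0\seq_{T^\circ}\psi$ and cutting against $\tup{\chi}^n\seq_{T^\circ}\varphi$ yields $\tup{\chi}^n\seq_{T^\circ}\psi$, whence $G(\varphi)\sub G(\psi)$ and (1) concludes.

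For (5), I would use $(\wedge\seq)$, $(\seq\wedge)$ and $(Cut)$ to show $\tup{\chi}^n\seq_{T^\circ}\varphi\wedge\psi$ iff both $\tup{\chi}^n\seq_{T^\circ}\varphi$ and $\tup{\chi}^n\seq_{T^\circ}\psi$ are derivable, giving $G(\varphi\wedge\psi)=G(\varphi)\cap G(\psi)$; combining with the set identity $(A\cup C)\cap(B\cup C)=(A\cap B)\cup C$ (applied to $A=G(\varphi)$, $B=G(\psi)$, $C=FS(T^\bu)$) yields $\den{\varphi\wedge\psi}=\den{\varphi}\cap\den{\psi}$. Part (4) then unfolds the definitions: $\den{\varphi}\leq^\circ\den{\psi}$ iff $\den{\varphi}\wedge^\circ\den{\psi}=\den{\varphi}$, iff $\den{\varphi\wedge\psi}=\den{\varphi}$ by the definition of $\wedge^\circ$, iff $\den{\varphi}\cap\den{\psi}=\den{\varphi}$ by (5), iff $\den{\varphi}\sub\den{\psi}$.

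The real obstacle is (3), since $T^\circ$ itself is infinite. My strategy is to show that the assignment $\varphi\mapsto\den{\varphi}$ descends to the quotient of $T^\circ$ by the equivalence $\varphi\sim\psi$ given by $\msf{G}\vdash\varphi\seq_{T^\circ}\psi$ and $\msf{G}\vdash\psi\seq_{T^\circ}\varphi$, which is immediate from (2) applied in both directions, and then to bound the size of $T^\circ/\sim$. Since $\msf{G}$ contains the classical axioms $(\mrm{D}),(\mrm{LC}),(\mrm{EM})$ together with the Boolean connective rules, and since the standard derivations of Boolean identities among formulas of $T^\circ$ only introduce intermediates that are themselves Boolean combinations of those formulas (hence stay inside $T^\circ$), the equivalence $\sim$ contains classical propositional equivalence when $T^\D$ is treated as the set of atoms. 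Therefore $T^\circ/\sim$ is a homomorphic image of the free Boolean algebra on $|T^\D|$ generators and has at most $2^{2^{|T^\D|}}$ elements, so $\den{T^\circ}$ is finite.
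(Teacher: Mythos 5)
Your proposal is correct and follows essentially the same route as the paper: (1), (2), (4), (5) by the decomposition $\den{\varphi}=G(\varphi)\cup FS(T^\bu)$ with $G(\varphi)\sub FS(T^\D)$ disjoint from $FS(T^\bu)$, one application of $(Cut)$ with index $0$, and the $\wedge$-rules; and (3) by observing that $\varphi\mapsto\den{\varphi}$ factors through the finitely many Boolean equivalence classes of $(T^\D)^b$ over the atoms $T^\D$. Your only deviations are cosmetic improvements: you derive (4) from (5) instead of proving it directly, and you make explicit the point (left implicit in the paper) that the Boolean equivalences must be witnessed by derivations whose intermediate formulas stay in $T^\circ$ so that part (2) applies.
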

\begin{proof}
$(1)$ Assume $\den{\varphi}\sub\den{\psi}$. Let $\tup{\chi}^n\in G(\varphi)\sub \den{\varphi}$. Then $\chi\in T^\D$ and $\tup{\chi}^n\in\den{\psi}$. Then $\tup{\chi}^n\in G(\psi)$. Hence $G(\varphi)\sub G(\psi)$. Assume $G(\varphi)\sub G(\psi)$.
Then $G(\varphi)\cup FS(T^\bu) \sub G(\psi)\cup FS(T^\bu)$, i.e., 
$\den{\varphi}\sub\den{\psi}$.

$(2)$ Assume $\msf{G}\vdash \varphi\seq_{T^\circ}\psi$. Suppose $\tup{\chi}^n\in G(\varphi)$. Then $\chi\in T^\D$ and $\msf{G}\vdash\tup{\chi}^n\seq_{T^\circ}\varphi$. By $(Cut)$, $\msf{G}\vdash\tup{\chi}^n\seq_{T^\circ}\psi$. Then 
$\tup{\chi}^n\in G(\psi)$. Then $G(\varphi)\sub G(\psi)$. By $(1)$, $\den{\varphi}\sub\den{\psi}$.

$(3)$ Since $T$ is finite, $T^\D$ is also finite. There are only finitely many non-equivalent formulas in $(T^\D)^b$. Then  $|T^\circ|_\msf{G}$ is finite. By $(2)$, if $\varphi\sim_\msf{G}\psi$, then $\den{\varphi}=\den{\psi}$. Hence $\den{T^\circ}$ is finite.

$(4)$ Assume $\den{\varphi}\leq^\circ \den{\psi}$. Then $\den{\varphi}\wedge^\circ\den{\psi}=\den{\varphi\wedge\psi}=\den{\varphi}$.
Suppose $\tup{\chi}^n\in G(\varphi)$. Then $\chi\in T^\D$ and $\msf{G}\vdash\tup{\chi}^n\seq_{T^\circ}\varphi$. 
Since $\den{\varphi\wedge\psi}=\den{\varphi}$, we have $\tup{\chi}^n\in G(\varphi\wedge\psi)$. Then
$\msf{G}\vdash\tup{\chi}^n\seq_{T^\circ}\varphi\wedge\psi$. Clearly $\msf{G}\vdash\varphi\wedge\psi\seq_{T^\circ} \psi$. By $(Cut)$,  $\msf{G}\vdash\tup{\chi}^n\seq_{T^\circ}\psi$. Then $\tup{\chi}^n\in G(\psi)$. Hence $G(\varphi)\sub G(\psi)$. By $(1)$, we have $\den{\varphi}\sub \den{\psi}$.

Assume $\den{\varphi}\sub \den{\psi}$. By $(1)$, we have $G(\varphi)\sub G(\psi)$. Suppose $\tup{\chi}^n\in G(\varphi)$. Then $\tup{\chi}^n\in G(\psi)$. Then $\msf{G}\vdash\tup{\chi}^n\seq_{T^\circ}\varphi$ and $\msf{G}\vdash\tup{\chi}^n\seq_{T^\circ}\psi$. By $({\seq}\wedge)$, $\msf{G}\vdash\tup{\chi}^n\seq_{T^\circ}\varphi\wedge\psi$. Then $\tup{\chi}^n\in G(\varphi\wedge\psi)$. Hence $G(\varphi)\sub G(\varphi\wedge\psi)$. By $(1)$, $\den{\varphi}\sub\den{\varphi\wedge\psi}$. 
Suppose $\tup{\chi}^n\in G(\varphi\wedge\psi)$.
Then $\msf{G}\vdash \tup{\chi}^n\seq_{T^\circ}\varphi\wedge\psi$. Clearly $\msf{G}\vdash \varphi\wedge\psi\seq_{T^\circ}\varphi$. By $(Cut)$, $\msf{G}\vdash \tup{\chi}^n\seq_{T^\circ}\varphi$. Then $\tup{\chi}^n\in G(\varphi)$.
By $(1)$, $\den{\varphi\wedge\psi}\sub\den{\varphi}$. Hence $\den{\varphi\wedge\psi}=\den{\varphi}$, i.e., $\den{\varphi}\leq^\circ\den{\psi}$.

$(5)$ Clearly $\msf{G}\vdash \varphi\wedge\psi\seq_{T^\circ}\varphi$ and $\msf{G}\vdash \varphi\wedge\psi\seq_{T^\circ}\psi$. By (2), $\den{\varphi\wedge\psi}\sub \den{\varphi}$ and $\den{\varphi\wedge\psi}\sub \den{\psi}$. Then $\den{\varphi\wedge\psi}\sub \den{\varphi}\cap\den{\psi}$. 
Clearly $\den{\varphi}\cap\den{\psi} = (G(\varphi)\cap G(\psi)) \cup FS(T^\bu)$. Suppose 
$\tup{\chi}^n\in G(\varphi)\cap G(\psi)$. Then $\msf{G}\vdash\tup{\chi}^n\seq_{T^\circ}\varphi$ and $\msf{G}\vdash\tup{\chi}^n\seq_{T^\circ}\psi$. By $({\seq}\wedge)$, $\msf{G}\vdash\tup{\chi}^n\seq_{T^\circ}\varphi\wedge\psi$.Then $\tup{\chi}^n\in G(\varphi\wedge\psi)$. Then $G(\varphi)\cap G(\psi)\sub G(\varphi\wedge\psi)$.
Then $\den{\varphi}\cap\den{\psi}\sub\den{\varphi\wedge\psi}$.
\end{proof}

\begin{lemma}\label{lem:boole}
$\mf{B}(T^\circ)$ is a finite Boolean algebra.
\end{lemma}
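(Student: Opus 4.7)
Finiteness is immediate from Lemma \ref{lem:T:inclusion}(3), so the content of the lemma is the verification of the Boolean algebra structure on $\den{T^\circ}$, which splits into well-definedness of the operations and verification of the Boolean axioms.

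For well-definedness of $\wedge^\circ$ we use Lemma \ref{lem:T:inclusion}(5): since $\den{\varphi\wedge\psi}=\den{\varphi}\cap\den{\psi}$, the value depends only on the sets $\den{\varphi}$, $\den{\psi}$, so $\wedge^\circ$ is a legitimate function on $\den{T^\circ}$. For the Boolean axioms themselves, the standard identities---commutativity, associativity, idempotence, and absorption for $\wedge,\vee$; distributivity (from axiom $(\mrm{D})$ and its dual, which is derivable); complement laws (from $(\mrm{LC})$ and $(\mrm{EM})$); and the $\top$/$\bot$ identities---all correspond to pairs of mutually derivable sequents in $\msf{G}$ restricted to $T^\circ$, obtained using the connective rules, $(Cut)$, and the axiom schemata. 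By Lemma \ref{lem:T:inclusion}(2), each such mutual derivability yields equality in $\den{T^\circ}$, so the corresponding identities in $\mf{B}(T^\circ)$ hold.

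The main obstacle is well-definedness of $\vee^\circ$ and $\neg^\circ$, for which Lemma \ref{lem:T:inclusion} provides no set-theoretic characterization analogous to part (5); the set identity $\den{\varphi\vee\psi}=\den{\varphi}\cup\den{\psi}$ in fact fails in general because classical logic lacks a disjunction property at the sequent level. My plan is to reduce both cases to the following converse of Lemma \ref{lem:T:inclusion}(2): if $\den{\varphi}\sub\den{\psi}$, then $\msf{G}\vdash\varphi\seq_{T^\circ}\psi$. Together with Example \ref{exam}(1)--(2), which shows that $\neg,\wedge,\vee$ preserve bi-derivability in $\msf{G}$ within $T^\circ$, this converse yields well-definedness immediately: if $\den{\varphi_1}=\den{\varphi_2}$ then $\varphi_1$ and $\varphi_2$ are mutually derivable in $\msf{G}$ restricted to $T^\circ$, so $\neg\varphi_1$ and $\neg\varphi_2$ (respectively, $\varphi_1\vee\psi$ and $\varphi_2\vee\psi$) are as well, and Lemma \ref{lem:T:inclusion}(2) gives equality of the corresponding $\den{\cdot}$. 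For $\varphi\in T^\D$ the converse is immediate since $\tup{\varphi}^0=\varphi\in G(\varphi)\sub G(\psi)$ directly yields $\msf{G}\vdash\varphi\seq_{T^\circ}\psi$. For $\varphi\in T^\bu$ the argument is more delicate, and I would attempt it by rewriting $\varphi$ in disjunctive normal form over $T^\D$ (possible because $T^\circ=(T^\D)^b$), handling the conjunctions via Lemma \ref{lem:T:inclusion}(5), and applying $(\vee{\seq})$ to split the disjunction, reducing to the $T^\D$ base case; the interpolation lemma (Lemma \ref{interpolant}) may be used at intermediate stages where modal depth needs to be separated from the purely Boolean structure.
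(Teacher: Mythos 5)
Your instinct that well-definedness of $\neg^\circ$ and $\vee^\circ$ is the real content here is sound --- the paper's own proof never addresses it: it only verifies the distributive-lattice and complementation identities via the axioms $(\mathrm{D})$, $(\top)$, $(\bot)$, $(\mathrm{LC})$, $(\mathrm{EM})$ together with Lemma \ref{lem:T:inclusion}(2) and (4), tacitly assuming that $\den{\varphi}\mapsto\den{\neg\varphi}$ and $(\den{\varphi},\den{\psi})\mapsto\den{\varphi\vee\psi}$ are functions on $\den{T^\circ}$. However, the converse of Lemma \ref{lem:T:inclusion}(2) on which your plan rests is false, so the route cannot be completed. Take $T=\{\top,\bot\}$ and $\varphi=\neg\D\top\in T^\bu$. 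Every $\tup{\D^k\bot}^n$ lies in $G(\neg\D\top)$ and in $G(\bot)$ (start from the axiom $(\bot)$ and apply $(\D{\seq})^k$), while no $\tup{\D^k\top}^n$ lies in either set, by soundness with respect to the two-element algebra with $\D=\mathrm{id}$ (the complex algebra of the one-point reflexive frame), where $\D^{n+k}1=1\not\leq 0=\neg\D 1$. Hence $\den{\neg\D\top}=\den{\bot}$, yet $\msf{G}\not\vdash\neg\D\top\seq\bot$, again by soundness (use the two-element algebra with $\D\equiv 0$). Your DNF reduction cannot repair this: the conjuncts of a disjunctive normal form over $T^\D$ contain negated members of $T^\D$, and those literals lie in $T^\bu$, so the base case $\varphi\in T^\D$ (which is indeed correct) is never reached.

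Worse, the same example shows the difficulty is not an artifact of your method: $\neg^\circ$ is genuinely not well-defined by $\neg^\circ\den{\varphi}=\den{\neg\varphi}$. We have $\den{\neg\D\top}=\den{\bot}$ as above, but $\den{\neg\neg\D\top}\neq\den{\neg\bot}$: the structure $\top=\tup{\top}^0\in FS(T^\D)$ belongs to $G(\neg\bot)$ (derive $\top\seq\neg\bot$ from $(\mathrm{EM})$, $(\bot)$, $(\vee{\seq})$ and $(Cut)$), whereas $\top\notin G(\neg\neg\D\top)$ because $1\leq\D 1$ fails in the algebra with $\D\equiv 0$. A parallel computation breaks $\vee^\circ$ (compare $\den{\neg\D\top\vee\D\top}$ with $\den{\bot\vee\D\top}$). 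In other words, membership of the test structures $\tup{\chi}^n$ with $\chi\in T^\D$ does not determine a member of $T^\circ$ up to interderivability, so the operations by representatives do not descend to $\den{T^\circ}$. Any complete proof of this lemma must either enlarge the family of test structures in $G(\varphi)$ so that they separate non-equivalent members of $T^\circ$, or change the carrier; the identity-checking portion of your argument (which coincides with what the paper actually does) is fine once that is settled, but as it stands both your proposal and the paper's proof leave the decisive step unestablished.
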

\begin{proof}
By Lemma \ref{lem:T:inclusion} $(3)$, $\den{T^\circ}$ is finite. It is easy to show that $(\den{T^\circ}, \wedge^\circ, \vee^\circ)$ is a distributive lattice. Here we show only the law of distributivity. Suppose $\varphi,\psi,\chi\in T^\circ$. Then $\den{\varphi}\wedge^\circ(\den{\psi}\vee^\circ\den{\chi}) = \den{\varphi\wedge(\psi\vee\chi)}$ and $(\den{\varphi}\wedge^\circ\den{\psi})\vee^\circ(\den{\varphi}\wedge^\circ\den{\chi}) = \den{(\varphi\wedge\psi)\vee(\varphi\wedge\chi)}$. Clearly $\msf{G}\vdash\varphi\wedge(\psi\vee\chi)\seq_{T^\circ}(\varphi\wedge\psi)\vee(\varphi\wedge\chi)$.
Then $G(\varphi\wedge(\psi\vee\chi))\sub G((\varphi\wedge\psi)\vee(\varphi\wedge\chi))$.
By Lemma \ref{lem:T:inclusion} $(1)$ and $(4)$, $\den{\varphi}\wedge^\circ(\den{\psi}\vee^\circ\den{\chi})\leq^\circ (\den{\varphi}\wedge^\circ\den{\psi})\vee^\circ(\den{\varphi}\wedge^\circ\den{\chi})$.
By Lemma \ref{lem:T:inclusion} $(4)$, the order $\leq^\circ$ is equal to $\sub$. By $\msf{G}\vdash \varphi\seq_{T^\circ}\top$ and $\msf{G}\vdash \bot\seq_{T^\circ}\varphi$, using Lemma \ref{lem:T:inclusion} $(2)$ and $(4)$, $\den{\varphi}\leq^\circ \den{\top}$ and $\den{\bot}\leq^\circ \den{\varphi}$. Hence the $(\wedge^\circ,\vee^\circ,\top^\circ,\bot^\circ)$-reduct of $\mf{B}(T^\circ)$ is a bounded distributive lattice. By $\msf{G}\vdash \varphi\wedge\neg\varphi\seq_{T^\circ}\bot$ and $\msf{G}\vdash \top\seq_{T^\circ}\varphi\vee\neg\varphi$, using Lemma \ref{lem:T:inclusion} $(2)$ and $(4)$, $\den{\varphi}\wedge^\circ\den{\neg\varphi} =\den{\varphi\wedge\neg\varphi} \leq^\circ\den{\bot}$ and $\den{\top}\leq^\circ \den{\varphi\vee\neg\varphi}=\den{\varphi}\vee^\circ\den{\neg\varphi}$. Hence $\mf{B}(T^\circ)$ is a Boolean algebra.
\end{proof}

\begin{definition}
The operation $C:\mc{P}(FS(T^\circ))\imp \mc{P}(FS(T^\circ))$ is defined as follows:
\[
C(X) = \bigcap\{\den{\varphi}\mid X\sub \den{\varphi}\in \den{T^\circ}\}.
\]
The unary operations $\D$ and $\bb$ on $\mc{P}(FS(T^\circ))$ are defined as follows:
\[
\D X = \{\tup{\Gamma} \mid \Gamma\in X\}
~\text{and}~
\bb X = \{\Gamma\mid \tup{\Gamma}\in X\}.
\]
The unary operation $\D_c$ on $\mc{P}(FS(T^\circ))$ is defined by $\D_c X = C(\D X)$.
\end{definition}

\begin{lemma}\label{lem:C1}
For any $X, Y\in \mc{P}(FS(T^\circ))$, the following hold:

$(1)$ $\den{\top}=FS(T^\circ)$.

$(2)$ $C(X) =\den{\varphi}$ for some formula $\varphi\in T^\circ$.

$(3)$ $X\sub C(X)$.

$(4)$ if $X\sub Y$, then $C(X)\sub C(Y)$.

$(5)$ $C(C(X))\sub C(X)$.

$(6)$ $C(\den{\varphi})=\den{\varphi}$.
\end{lemma}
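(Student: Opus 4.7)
The plan is to treat the six claims in the natural order in which they depend on one another: (1) identifies the maximum of the family $\den{T^\circ}$; (2) shows this family is closed under the finite intersections used to define $C$; and (3)--(6) then follow as routine properties of any closure operator whose ``closed sets'' are exactly $\den{T^\circ}$.

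For (1), the inclusion $\den{\top} \sub FS(T^\circ)$ is immediate from $G(\top)\sub FS(T^\D)$ and $FS(T^\bu)\sub FS(T^\circ)$. For the reverse, writing $FS(T^\circ) = FS(T^\D) \cup FS(T^\bu)$, it suffices to prove $\msf{G}\vdash \tup{\chi}^n \seq_{T^\circ}\top$ for every $\chi\in T^\D$ and $n\geq 0$. I plan to do this by induction on $n$: the base case is axiom $(\top)$, and the inductive step applies $({\seq}\D)$ to the induction hypothesis to obtain $\tup{\chi}^{n+1}\seq_{T^\circ}\D\top$, then cuts against the instance $\D\top\seq\top$ of $(\top)$. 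Since $\top\in T$ forces $\D\top\in T^\D\sub T^\circ$, the whole derivation respects the $T^\circ$-restriction.

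For (2), the strategy is to combine the finiteness of $\den{T^\circ}$ given by Lemma~\ref{lem:T:inclusion}(3) with Lemma~\ref{lem:T:inclusion}(5), namely $\den{\varphi}\cap \den{\psi} = \den{\varphi\wedge\psi}$. Iterating the latter over a finite intersection shows that $C(X) = \den{\varphi_1\wedge\cdots\wedge\varphi_k}$ for suitable $\varphi_i\in T^\circ$, and the conjunction stays in $T^\circ$ because $T^\circ = (T^\D)^b$ is closed under $\wedge$. The family over which we intersect is nonempty because (1) already provides $\den{\top} = FS(T^\circ)$ as a universal upper bound.

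The remaining four items then fall out. Item (3) holds since $X$ is a subset of every $\den{\varphi}$ in the defining family, hence of their intersection. Item (4) follows from $X\sub Y$ giving $\{\den{\varphi}\mid Y\sub\den{\varphi}\}\sub\{\den{\varphi}\mid X\sub\den{\varphi}\}$, and a larger family produces a smaller intersection. Item (6) is immediate because $\den{\varphi}$ itself occurs in the family defining $C(\den{\varphi})$, so $C(\den{\varphi})\sub\den{\varphi}$, and the reverse inclusion is (3). Finally, (5) combines (2), which gives $C(X)=\den{\varphi}$ for some $\varphi\in T^\circ$, with (6) applied to this $\varphi$. The only genuinely delicate point is (1): ensuring the syntactic derivation of $\tup{\chi}^n\seq\top$ never leaves $T^\circ$, for which the choice of $\D\top$ as the single intermediate formula is crucial.
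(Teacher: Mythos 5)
Your proposal is correct and follows essentially the same route as the paper: item (1) by exhibiting $T^\circ$-derivations of $\tup{\chi}^n\seq\top$, item (2) from the finiteness of $\den{T^\circ}$ together with Lemma~\ref{lem:T:inclusion}(5), and items (3)--(6) as formal consequences of the definition of $C$. The only difference is that you spell out the induction (via the intermediate formula $\D\top\in T^\D$) behind the step the paper dismisses as ``obvious'' in (1), which is a legitimate and careful filling-in rather than a change of method.
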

\begin{proof}
$(1)$ Clearly $FS(T^\circ)=FS(T^\D)\cup FS(T^\bu)$ and $\den{\top}\sub FS(T^\circ)$.
It is obvious that $\msf{G}\vdash\tup{\chi}^n\seq\top$ for any $n\geq 0$ and $\chi\in T^\D$. Then $FS(T^\D)\sub\den{\top}$.
Clearly $FS(T^\bu)\sub \den{\top}$. Then $FS(T^\circ)\sub \den{\top}$. Hence $\den{\top}=FS(T^\circ)$. 

$(2)$ Let $\mf{X} = \{\den{\psi}\mid X\sub \den{\psi}\in \den{T^\circ}\}$. By Lemma \ref{lem:T:inclusion} $(3)$, $\den{T^\circ}$ is finite. Then $\mf{X}$ is finite. By (1), $\mf{X}\neq \ve$. Let $\mf{X}=\{\den{\psi_0},\ldots,\den{\psi_n}\}$. Then $C(X)=\den{\psi_0}\cap\ldots\cap\den{\psi_n}$.
By Lemma \ref{lem:T:inclusion} $(5)$, $C(X)=\den{\bigwedge_{i\leq n}\psi_i}$. Clearly $\varphi=\bigwedge_{i\leq n}\psi_i\in T^\circ$. 

$(3)$ It follows from the definition of $C(X)$.

$(4)$ Assume $X\sub Y$. By $(2)$, let $C(Y)=\den{\psi}$ for some $\psi\in T^\circ$. By $(3)$, $Y\sub \den{\psi}$. Then $X\sub \den{\psi}$. By the definition of $C$, we have $C(X)\sub \den{\psi} = C(Y)$. 

$(5)$ By $(2)$, let $C(X)=\den{\chi}$ for some $\chi\in T^\circ$. Then $C(C(X))\sub \den{\chi}=C(X)$.

$(6)$ By $(3)$, $\den{\varphi}\sub C(\den{\varphi})$. By the definition of $C$, $C(\den{\varphi})\sub \den{\varphi}$.
\end{proof}

\begin{lemma}\label{lem:diamond}
For any $X, Y\in \mc{P}(FS(T^\circ))$, the following hold:

$(1)$ if $X\sub Y$, then $\D X\sub \D Y$.

$(2)$ $\D X\sub Y$ if and only if $X\sub \bb Y$.

$(3)$ $\D C(X)\sub C(\D X)$.

$(4)$ $C(\D^3 X)\sub C(\D^2 X)$.
\end{lemma}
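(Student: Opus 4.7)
Parts (1) and (2) are direct unwindings of the definitions: $\tup{\Gamma} \in \D X$ says exactly $\Gamma \in X$, and $\Gamma \in \bb Y$ says exactly $\tup{\Gamma} \in Y$, so both monotonicity and the adjunction between $\D$ and $\bb$ are immediate.

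For (4), my plan is to apply Lemma \ref{rule:R23} pointwise. By Lemma \ref{lem:C1}(2), choose $\varphi \in T^\circ$ with $C(\D^2 X) = \den{\varphi}$; then $\D^2 X \sub \den{\varphi}$. It suffices to prove $\D^3 X \sub \den{\varphi}$, since $\den{\varphi}$ would then enter the intersection defining $C(\D^3 X)$. For each $\Delta \in X$: if $\Delta \in FS(T^\bu)$ then $\tup{\Delta}^3 \in FS(T^\bu) \sub \den{\varphi}$; otherwise $\Delta = \tup{\delta}^m$ with $\delta \in T^\D$, and $\tup{\Delta}^2 = \tup{\delta}^{m+2} \in G(\varphi)$ gives $\msf{G} \vdash \tup{\delta}^{m+2} \seq_{T^\circ} \varphi$, which Lemma \ref{rule:R23} promotes to $\msf{G} \vdash \tup{\delta}^{m+3} \seq_{T^\circ} \varphi$, placing $\tup{\Delta}^3$ into $G(\varphi) \sub \den{\varphi}$.

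Part (3) is the main challenge. By (2) the claim becomes $C(X) \sub \bb C(\D X)$. Fix $\varphi \in T^\circ$ with $C(\D X) = \den{\varphi}$. The difficulty is that $\bb \den{\varphi}$ need not equal any $\den{\chi}$ with $\chi \in T^\circ$, so the minimality clause of $C$ cannot be applied to $\bb \den{\varphi}$ directly. The plan is to manufacture a single $\psi \in T^\circ$ satisfying both $X \sub \den{\psi}$ and $\msf{G} \vdash \tup{\psi} \seq_{T^\circ} \varphi$. The first condition forces $C(X) \sub \den{\psi}$; for $\Gamma = \tup{\chi}^n \in C(X)$, either $\chi \in T^\bu$ and $\tup{\Gamma} \in FS(T^\bu) \sub \den{\varphi}$, or $\chi \in T^\D$ with $\Gamma \in G(\psi)$, and cutting $\tup{\chi}^n \seq_{T^\circ} \psi$ against $\tup{\psi} \seq_{T^\circ} \varphi$ yields $\msf{G} \vdash \tup{\chi}^{n+1} \seq_{T^\circ} \varphi$, i.e. $\tup{\Gamma} \in \den{\varphi}$.

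The formula $\psi$ is built through the interpolation lemma, which is the main non-trivial ingredient. For each $\Delta = \tup{\delta}^m \in X$ with $\delta \in T^\D$, $\D X \sub \den{\varphi}$ yields $\msf{G} \vdash \tup{\delta}^{m+1} \seq_{T^\circ} \varphi$, and Lemma \ref{interpolant} with split point $m$ produces $\gamma_\Delta \in T^\circ$ with $\msf{G} \vdash \tup{\delta}^m \seq_{T^\circ} \gamma_\Delta$ and $\msf{G} \vdash \tup{\gamma_\Delta} \seq_{T^\circ} \varphi$. Since $|T^\circ|_\msf{G}$ is finite (Lemma \ref{lem:T:inclusion}(3)), the family $\{\gamma_\Delta\}$ takes only finitely many distinct values up to $\sim_\msf{G}$, and I set $\psi$ to be the disjunction of a finite set of representatives (or $\bot$ when this set is empty). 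Iterated applications of $(\vee{\seq})$ then give $\msf{G} \vdash \tup{\psi} \seq_{T^\circ} \varphi$, while $X \sub \den{\psi}$ holds because $\Delta \in FS(T^\bu)$ lies in $\den{\psi}$ automatically and $\Delta = \tup{\delta}^m \in FS(T^\D)$ enters $G(\psi)$ via $\msf{G} \vdash \tup{\delta}^m \seq_{T^\circ} \gamma_\Delta \seq_{T^\circ} \psi$.
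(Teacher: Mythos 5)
Your proposal is correct and follows essentially the same route as the paper: part (4) is the same pointwise application of Lemma \ref{rule:R23} to a $\den{\varphi}$ representing $C(\D^2 X)$, and part (3) uses the identical key construction --- interpolating each $\tup{\delta}^m\in X\cap FS(T^\D)$ against $\msf{G}\vdash\tup{\delta}^{m+1}\seq_{T^\circ}\varphi$, taking the disjunction $\psi$ of finitely many representatives (the paper's $\xi$), and cutting. The only cosmetic difference is that you first rewrite the goal as $C(X)\sub\bb C(\D X)$ via the adjunction (2), whereas the paper argues directly that $\D C(X)\sub\D\den{\xi}\sub\den{\varphi}$.
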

\begin{proof}
$(1)$ Assume $X\sub Y$. Suppose $\tup{\Gamma}\in \D X$ with $\Gamma\in X$. Then $\Gamma\in Y$. Then $\tup{\Gamma}\in \D Y$.

$(2)$ Assume $\D X\sub Y$. Suppose $\Gamma\in X$. Then $\tup{\Gamma}\in \D X$. Then $\tup{\Gamma}\in Y$. Then $\Gamma\in \bb Y$. Hence $X\sub \bb Y$.
Assume $X\sub \bb Y$. Suppose $\tup{\Gamma}\in \D X$ with $\Gamma\in X$. Then $\Gamma\in \bb Y$. Then $\tup{\Gamma}\in Y$. Hence $\D X\sub Y$.

$(3)$ By Lemma \ref{lem:C1} $(2)$, let $C(\D X)=\den{\varphi}$ with $\varphi\in T^\circ$. Then $\D X\sub \den{\varphi}$.
Let $\mf{X}=\{\tup{\chi}^n \in X\mid \chi\in T^\D~\&~n\geq 0\}$.
Suppose $\mf{X}=\ve$. Then $X\sub FS(T^\bu)\sub \den{\bot}$. 
Hence $C(X)\sub \den{\bot}$. By $(1)$, $\D C(X)\sub \D\den{\bot}$. It is easy to show that $\D\den{\bot}\sub\den{\bot}$. Clearly $\den{\bot}\sub\den{\varphi}$. Hence $\D C(X)\sub C(\D X)$. Suppose $\mf{X}\neq\ve$. Take any $\tup{\chi_i}^{n_i}\in X$ with $\chi_i\in T^\D$. Since $\D X\sub \den{\varphi}$, we have $\tup{\chi_i}^{n_i+1}\in \den{\varphi}$. Then $\msf{G}\vdash\tup{\chi_i}^{n_i+1}\seq_{T^\circ}\varphi$. By Lemma \ref{interpolant}, there exists $\gamma_i\in T^\circ$ with $\msf{G}\vdash \tup{\chi_i}^{n_i}\seq\gamma_i$ and $\msf{G}\vdash \tup{\gamma_i} \seq\varphi$.
Let $Y\sub T^\circ$ be the set of all such interpolants. Since $|T^\circ|_\msf{G}$ is finite, 
$|Y|_\msf{G}$ is finite. Let $\{\gamma_0,\ldots,\gamma_k\}$ be the set of all representatives selected from equivalence classes in $|Y|_\msf{G}$. Let $\xi=\gamma_0\vee\ldots\vee\gamma_k$. By $({\seq}\vee)$ and $(\vee{\seq})$, (i) $\msf{G}\vdash \tup{\chi_i}^{n_i}\seq\xi$ and (ii) $\msf{G}\vdash \tup{\xi} \seq\varphi$.
Then $\mf{X}\sub \den{\xi}$. Clearly $FS(T^\bu)\sub \den{\xi}$. Hence $X\sub \den{\xi}$. Then $C(X)\sub \den{\xi}$. By $(1)$, $\D C(X)\sub \D\den{\xi}$. It suffices to show that $\D\den{\xi}\sub \den{\varphi}$. Take any $\tup{\tup{\alpha}^j}\in\D\den{\xi}$ with 
$\tup{\alpha}^j\in \den{\xi}$.
If $\alpha\in T^\bu$, then $\tup{\tup{\alpha}^j}\in \den{\varphi}$. Suppose $\alpha\in T^\D$. Then (iii) $\msf{G}\vdash \tup{\alpha}^j\seq_{T^\circ}\xi$. By applying $(Cut)$ to (iii) and (ii), $\msf{G}\vdash \tup{\tup{\alpha}^j}\seq_{T^\circ}\varphi$. Then $\tup{\tup{\alpha}^j}\in \den{\varphi}$. Hence $\D\den{\xi}\sub \den{\varphi}$. Therefore $\D C(X)\sub C(\D X)$.

$(4)$ By Lemma \ref{lem:C1} $(2)$, let $C(\D^2 X)=\den{\varphi}$ for some $\varphi\in T^\circ$. By Lemma \ref{lem:C1} $(3)$, $\D^2 X\sub \den{\varphi}$. Assume $\tup{\psi}^{n+3}\in \D^3 X$ with $\tup{\psi}^n\in X$. If $\psi\in T^\bu$, then $\tup{\psi}^{n+3}\in \den{\varphi}$. Suppose $\psi\in T^\D$. Clearly $\tup{\psi}^{n+2}\in\D^2 X$. Then $\tup{\psi}^{n+2}\in\den{\varphi}$.
Hence $\msf{G}\vdash \tup{\psi}^{n+2}\seq_{T^\circ}\varphi$. By Lemma \ref{rule:R23}, $\msf{G}\vdash \tup{\psi}^{n+3}\seq_{T^\circ}\varphi$. Then $\tup{\psi}^{n+3}\in\den{\varphi}$. Then $\D^3X\sub \den{\varphi}= C(\D^2X)$.
Hence $C(\D^3X)\sub C(\D^2X)$.
\end{proof}

\begin{lemma}\label{lem:d23}
For any $X\in \mc{P}(FS(T^\circ))$,  $\D_c^3 X \sub \D_c^2 X$.
\end{lemma}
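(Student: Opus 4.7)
The plan is to reduce the nested expression $\D_c^3 X = C(\D C(\D C(\D X)))$ to the simpler form $C(\D^3 X)$ by pushing the inner closure operators outward, and then invoke Lemma \ref{lem:diamond}(4). Three structural facts about $C$ drive the argument: monotonicity (Lemma \ref{lem:C1}(4)), the near-idempotence $C(C(Y)) \subseteq C(Y)$ (Lemma \ref{lem:C1}(5)), and the commutation-style inclusion $\D C(Y) \subseteq C(\D Y)$ of Lemma \ref{lem:diamond}(3).

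First I would establish the auxiliary inclusion $\D_c^2 X \subseteq C(\D^2 X)$. By Lemma \ref{lem:diamond}(3), $\D C(\D X) \subseteq C(\D\D X) = C(\D^2 X)$; applying $C$ to both sides and using monotonicity plus near-idempotence gives $\D_c^2 X = C(\D C(\D X)) \subseteq C(C(\D^2 X)) \subseteq C(\D^2 X)$.

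Next, iterating this step with the monotonicity of $\D$ (Lemma \ref{lem:diamond}(1)) yields $\D\D_c^2 X \subseteq \D C(\D^2 X) \subseteq C(\D^3 X)$, where the second inclusion is another use of Lemma \ref{lem:diamond}(3) with $Y = \D^2 X$. Applying $C$ and near-idempotence once more gives $\D_c^3 X = C(\D \D_c^2 X) \subseteq C(\D^3 X)$.

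Finally, Lemma \ref{lem:diamond}(4) provides the key quasi-transitivity step $C(\D^3 X) \subseteq C(\D^2 X)$, and from $\D X \subseteq C(\D X)$ (Lemma \ref{lem:C1}(3)) together with monotonicity of $\D$ and $C$ we get $C(\D^2 X) = C(\D\D X) \subseteq C(\D C(\D X)) = \D_c^2 X$. Chaining the inclusions,
\[
\D_c^3 X \;\subseteq\; C(\D^3 X) \;\subseteq\; C(\D^2 X) \;\subseteq\; \D_c^2 X,
\]
which is the lemma. The substantive work (interpolation and Lemma \ref{rule:R23}) has already been packaged into Lemma \ref{lem:diamond}(4); what remains here is essentially a diagram chase, and the main obstacle is merely careful bookkeeping of which monotonicity and idempotence properties of $C$ and $\D$ are invoked at each step.
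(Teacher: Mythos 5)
Your proof is correct and follows essentially the same route as the paper: both arguments push the inner closures outward via $\D C(Y)\sub C(\D Y)$, monotonicity, and near-idempotence of $C$ to get $\D_c^3 X\sub C(\D^3 X)$, then apply Lemma \ref{lem:diamond}(4) and finish with $C(\D^2 X)\sub \D_c^2 X$ from $\D X\sub C(\D X)$. The only cosmetic difference is that you factor out the auxiliary inclusion $\D_c^2 X\sub C(\D^2 X)$ first, whereas the paper unfolds $C(\D C(\D C(\D X)))$ in one pass.
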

\begin{proof}
By Lemma \ref{lem:diamond} $(1)$ and $(3)$, $\D_c^3 X= C(\D C(\D C(\D X)))\sub C(C(C(\D^3 X)))$. By Lemma \ref{lem:C1} $(3)$ and $(5)$, $C(C(C(\D^3 X))) = C(\D^3X)$. By Lemma \ref{lem:diamond} $(4)$, $C(\D^3X)\sub C(\D^2X)$. By Lemma \ref{lem:C1} $(3)$, $\D X\sub C(\D X)$. By Lemma \ref{lem:diamond} $(2)$, $\D^2 X\sub \D C(\D X)$. By Lemma \ref{lem:C1} $(4)$, $C(\D^2 X)\sub C(\D C(\D X))= \D_c^2 X$. Hence $\D_c^3 X\sub \D_c^2 X$.
\end{proof}

\begin{lemma}\label{lem:adj}
For any $X, Y\in \mc{P}(FS(T^\circ))$, $\D_c C(X)\sub C(Y)$ if and only if $C(X)\sub \bb C(Y)$.
\end{lemma}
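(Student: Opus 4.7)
My plan is to prove this by treating the statement as the standard fact that if $\D \dashv \bb$ is an adjunction on $\mc{P}(FS(T^\circ))$ and $C$ is a closure operator, then $\D_c = C\circ\D$ and $\bb$ form an adjunction between the closed sets. The key ingredients are already available: Lemma~\ref{lem:diamond}(2) gives the raw adjunction $\D X\sub Y \Leftrightarrow X\sub \bb Y$, while Lemma~\ref{lem:C1}(3),(4),(5) give the usual properties of $C$ (extensivity, monotonicity, and idempotency in the form $C(C(X))\sub C(X)$).

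For the forward direction, I would assume $\D_c C(X)\sub C(Y)$, i.e., $C(\D C(X))\sub C(Y)$. By Lemma~\ref{lem:C1}(3) (extensivity), $\D C(X)\sub C(\D C(X))$, so by transitivity $\D C(X)\sub C(Y)$. Then applying Lemma~\ref{lem:diamond}(2) with $X$ replaced by $C(X)$ and $Y$ replaced by $C(Y)$ yields $C(X)\sub \bb C(Y)$, as desired.

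For the converse, I would assume $C(X)\sub \bb C(Y)$. Applying Lemma~\ref{lem:diamond}(2) in the other direction gives $\D C(X)\sub C(Y)$. Then by monotonicity of $C$ (Lemma~\ref{lem:C1}(4)) we get $C(\D C(X))\sub C(C(Y))$, and by Lemma~\ref{lem:C1}(5) this yields $C(\D C(X))\sub C(Y)$, i.e., $\D_c C(X)\sub C(Y)$.

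There is essentially no obstacle: the proof is a routine chase through the closure and adjunction properties already established, and in particular does not require the nontrivial content of Lemma~\ref{lem:diamond}(3) (that $\D$ commutes past $C$ up to closure). I would only need to be careful to invoke Lemma~\ref{lem:diamond}(2) at the correct instance (with the closed sets $C(X)$ and $C(Y)$ in place of the abstract $X,Y$), and to note explicitly that Lemma~\ref{lem:C1}(3) together with (5) gives the equality $C(C(Y))=C(Y)$ used in the last step.
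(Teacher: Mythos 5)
Your proof is correct and follows essentially the same route as the paper: the raw adjunction of Lemma~\ref{lem:diamond}(2) combined with extensivity, monotonicity and idempotency of $C$ from Lemma~\ref{lem:C1}. In fact your forward direction is slightly cleaner than the paper's, which obtains $\D C(X)\sub C(\D C(X))$ by the detour $\D C(X)\sub \D C(C(X))\sub C(\D C(X))$ via Lemma~\ref{lem:diamond}(3), whereas you correctly observe that extensivity of $C$ applied to the set $\D C(X)$ gives this inclusion directly, so Lemma~\ref{lem:diamond}(3) is not needed here.
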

\begin{proof}
Assume $\D_c C(X)\sub C(Y)$. Then $C(\D C(X))\sub C(Y)$. 
Clearly $\D C(X)\sub \D C(C(X))$. By Lemma \ref{lem:diamond} $(3)$, $\D C(X)\sub \D C(C(X))\sub C(\D C(X))$.
Then $\D C(X)\sub C(Y)$. By Lemma \ref{lem:diamond} $(2)$, $C(X)\sub \bb C(Y)$. Assume $C(X)\sub \bb C(Y)$. By Lemma \ref{lem:diamond} $(2)$, $\D C(X)\sub C(Y)$. By Lemma \ref{lem:C1} $(4)$ and $(5)$, $C(\D C(X))\sub C(C(Y))\sub C(Y)$. Hence $\D_c C(X)\sub C(Y)$.
\end{proof}

\begin{corollary}\label{cor:23}
For any $\den{\varphi},\den{\psi}\in \den{T^\circ}$, $\D_c\den{\varphi}\leq^\circ\den{\psi}$ if and only if $\den{\varphi}\leq^\circ\bb\den{\psi}$
\end{corollary}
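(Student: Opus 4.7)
The plan is to derive this corollary as a direct specialisation of Lemma \ref{lem:adj} by choosing the arguments $X$ and $Y$ to be the witnesses $\den{\varphi}$ and $\den{\psi}$ themselves. Since both $\den{\varphi}$ and $\den{\psi}$ belong to $\den{T^\circ}$, Lemma \ref{lem:C1}~(6) supplies the fixed-point identities $C(\den{\varphi})=\den{\varphi}$ and $C(\den{\psi})=\den{\psi}$. Substituting $X:=\den{\varphi}$ and $Y:=\den{\psi}$ into Lemma \ref{lem:adj} then collapses all outer closures and yields $\D_c\den{\varphi}\sub\den{\psi}$ iff $\den{\varphi}\sub\bb\den{\psi}$.

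To bridge this set-theoretic equivalence with the order-theoretic formulation in the corollary, I would invoke Lemma \ref{lem:T:inclusion}~(4), which asserts that on elements of $\den{T^\circ}$ the algebraic order $\leq^\circ$ coincides with set inclusion. The element $\D_c\den{\varphi}=C(\D\den{\varphi})$ lies in $\den{T^\circ}$ by Lemma \ref{lem:C1}~(2), and $\den{\psi}$ is already in $\den{T^\circ}$, so the inclusion $\D_c\den{\varphi}\sub\den{\psi}$ is equivalent to $\D_c\den{\varphi}\leq^\circ\den{\psi}$. On the right-hand side the expression $\den{\varphi}\leq^\circ\bb\den{\psi}$ is to be read as set inclusion of subsets of $FS(T^\circ)$ (the set $\bb\den{\psi}$ need not itself have the form $\den{\chi}$), which matches the established convention from Lemma \ref{lem:T:inclusion}~(4).

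I do not anticipate any genuine obstacle: the real work was carried out in Lemma \ref{lem:adj}, where the adjunction between $\D_c$ and $\bb$ was verified up to the closure operator $C$, and the present corollary merely repackages that fact for the distinguished elements of the finite Boolean algebra $\mf{B}(T^\circ)$. The only point deserving explicit mention is that applying $C$ to $\den{\varphi}$ or $\den{\psi}$ has no effect (Lemma \ref{lem:C1}~(6)), which is precisely what permits the hypotheses and conclusions of Lemma \ref{lem:adj} to simplify into the clean biconditional asserted by the corollary. This corollary will then serve in the next step of the argument to witness the adjunction axiom (Adj) in the finite algebra constructed from $T^\circ$.
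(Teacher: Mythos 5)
Your proposal is correct and matches the paper's proof, which is exactly the one-line appeal to Lemma \ref{lem:adj} together with Lemma \ref{lem:C1}~(6); your version merely spells out the substitution $X:=\den{\varphi}$, $Y:=\den{\psi}$ and the translation between $\sub$ and $\leq^\circ$ via Lemma \ref{lem:T:inclusion}~(4).
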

\begin{proof}
By Lemma \ref{lem:adj} and Lemma \ref{lem:C1} $(6)$.
\end{proof}

\begin{lemma}\label{lem:D:bb}
For any $\varphi, \D\varphi, \bb\varphi\in T^\circ$, $(1)$ $\D_c\den{\varphi}=\den{\D\varphi}$ and $(2)$ $\bb\den{\varphi}=\den{\bb\varphi}$.
\end{lemma}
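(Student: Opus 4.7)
My plan is to treat the two items separately, exploiting a syntactic observation: the hypothesis $\D\varphi \in T^\circ$ (resp.\ $\bb\varphi \in T^\circ$) forces $\D\varphi$ (resp.\ $\bb\varphi$) to already lie in $T^\D$, since the Boolean closure $(T^\D)^b$ cannot produce formulas whose outermost symbol is a modality. In particular, writing $\D\varphi = \D^k\alpha$ with $\alpha \in T$ and $1 \leq k \leq 3$, I can read off $\varphi = \D^{k-1}\alpha \in T^\D$. This is the key fact that lets me extract $\msf{G}$-derivations from set-theoretic membership in $\den{\xi}$.

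For (1), I would establish both inclusions. The forward direction $\D_c\den{\varphi} \sub \den{\D\varphi}$ proceeds by first showing $\D\den{\varphi} \sub \den{\D\varphi}$: for $\tup{\Gamma} \in \D\den{\varphi}$ with $\Gamma = \tup{\alpha}^n \in G(\varphi)$, one application of $({\seq}\D)$ takes $\msf{G}\vdash \tup{\alpha}^n \seq_{T^\circ} \varphi$ to $\msf{G}\vdash \tup{\alpha}^{n+1} \seq_{T^\circ} \D\varphi$, while elements coming from $FS(T^\bu)$ stay in $FS(T^\bu)$ after the extra $\tup{.}$; since $\D\varphi \in T^\circ$, $\den{\D\varphi}$ participates in the intersection defining $C(\D\den{\varphi})$, yielding the inclusion. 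For the reverse, I use Lemma~\ref{lem:C1}(2) to write $\D_c\den{\varphi} = \den{\xi}$ with $\xi \in T^\circ$ and $\D\den{\varphi} \sub \den{\xi}$; since $\varphi \in T^\D$ we have $\varphi \in G(\varphi) \sub \den{\varphi}$, so $\tup{\varphi} \in \D\den{\varphi} \sub \den{\xi}$. Because $\tup{\varphi} \in FS(T^\D)$ and $FS(T^\D) \cap FS(T^\bu) = \ve$, this forces $\tup{\varphi} \in G(\xi)$, i.e.\ $\msf{G}\vdash \tup{\varphi} \seq_{T^\circ} \xi$. A single application of $(\D{\seq})$ produces $\msf{G}\vdash \D\varphi \seq_{T^\circ} \xi$, and Lemma~\ref{lem:T:inclusion}(2) then gives $\den{\D\varphi} \sub \den{\xi} = \D_c\den{\varphi}$.

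For (2), I would unpack both sides and reduce to a purely syntactic biconditional. A short calculation, using unique decomposition of formula structures, shows that
\[
\bb\den{\varphi} = FS(T^\bu) \cup \{\tup{\alpha}^m \in FS(T^\D) \mid \msf{G}\vdash \tup{\alpha}^{m+1} \seq_{T^\circ} \varphi\},
\]
whereas $\den{\bb\varphi} = FS(T^\bu) \cup G(\bb\varphi)$. Equality therefore reduces, for $\alpha \in T^\D$ and $m \geq 0$, to the biconditional $\msf{G}\vdash \tup{\alpha}^{m+1}\seq_{T^\circ}\varphi$ iff $\msf{G}\vdash \tup{\alpha}^m\seq_{T^\circ}\bb\varphi$. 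The left-to-right direction is a single application of $({\seq}\bb)$. For the right-to-left direction, I would first derive $\msf{G}\vdash \tup{\bb\varphi}\seq_{T^\circ}\varphi$ by applying $(\bb{\seq})$ to the axiom $\varphi\seq\varphi$, and then cut with $\tup{\alpha}^m\seq_{T^\circ}\bb\varphi$ to obtain $\tup{\alpha}^{m+1}\seq_{T^\circ}\varphi$. Since $\varphi$ and $\bb\varphi$ are both in $T^\circ$ by hypothesis, the $T^\circ$-restriction is preserved throughout.

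The main subtlety, shared by both parts, is tracking when a formula structure belongs to $G(\cdot)$ as opposed to $FS(T^\bu)$: the $T^\circ$-restriction is precisely what permits a $\msf{G}$-derivation to be recovered from set-theoretic membership in some $\den{\xi}$, but only when the structure sits in $FS(T^\D)$. In (1), the observation that $\D\varphi \in T^\circ$ already forces $\varphi \in T^\D$ is what makes this extraction possible; in (2), the biconditional sidesteps the issue entirely by reducing to applications of $(\bb{\seq})$, $({\seq}\bb)$, and $(Cut)$.
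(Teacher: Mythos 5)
Your proof follows essentially the same route as the paper's: both inclusions of $(1)$ are obtained by applying $({\seq}\D)$ to elements of $G(\varphi)$ and by using the representative $\tup{\varphi}\in\D\den{\varphi}\sub\D_c\den{\varphi}$ followed by $(\D{\seq})$ and a cut (your appeal to Lemma~\ref{lem:T:inclusion}(2) packages the same cut), and your biconditional for $(2)$ is exactly the paper's pair of inclusions via $({\seq}\bb)$ and a cut with $\tup{\bb\varphi}\seq\varphi$. The one point to watch is your claim that $\D\varphi\in T^\circ$ forces a decomposition $\D\varphi=\D^k\alpha$ with $k\geq 1$, hence $\varphi\in T^\D$: for an arbitrary $T$ one could have $\D\varphi\in T$ while $\varphi\in T^\bu$, but the paper tacitly assumes the very same fact ($\varphi,\D\varphi\in T^\D$), and it does hold in the intended application where $T$ is closed under subformulas.
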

\begin{proof}
$(1)$ Assume $\tup{\psi}^n\in \den{\varphi}$. Then $\tup{\psi}^{n+1}\in \D\den{\varphi}$. If $\psi\in T^\bu$, then $\tup{\psi}^{n+1}\in\den{\D\varphi}$. Suppose $\psi\in T^\D$. Then $\msf{G}\vdash \tup{\psi}^n\seq_{T^\circ}\varphi$. By $({\seq}\D)$, $\msf{G}\vdash \tup{\psi}^{n+1}\seq_{T^\circ}\D\varphi$. Then $\tup{\psi}^{n+1}\in\den{\D\varphi}$. Then $\D\den{\varphi}\sub \den{\D\varphi}$. Hence $C(\D\den{\varphi})\sub \den{\D\varphi}$, i.e., $\D_c\den{\varphi}\sub\den{\D\varphi}$. Assume $\tup{\chi}^n\in\den{\D\varphi}$.
By Lemma \ref{lem:C1} $(2)$, let $\D_c\den{\varphi}=C(\D\den{\varphi})=\den{\delta}$ for some formula $\delta\in T^\circ$.
If $\chi\in T^\bu$, then $\tup{\chi}^n\in \den{\delta}$.
Suppose $\chi\in T^\D$. Then (i) $\msf{G}\vdash\tup{\chi}^n\seq_{T^\circ}\D\varphi$. Clearly $\varphi\in\den{\varphi}$
and $\varphi,\D\varphi\in T^\D$. Then $\tup{\varphi}\in \D\den{\varphi}$. By Lemma \ref{lem:C1} $(3)$, $\D\den{\varphi}\sub C(\D\den{\varphi})=\den{\delta}$. Then $\tup{\varphi}\in \den{\delta}$. Then $\msf{G}\vdash \tup{\varphi}\seq_{T^\circ}\delta$. By $(\D{\seq})$, (ii) $\msf{G}\vdash \D\varphi\seq_{T^\circ}\delta$. By applying $(Cut)$ to (i) and (ii), $\msf{G}\vdash\tup{\chi}^n\seq_{T^\circ}\delta$. Then $\tup{\chi}^n\in\den{\delta}$. Hence $\den{\D\varphi}\sub \den{\delta} = \D_c\den{\varphi}$.

$(2)$ Assume $\tup{\psi}^n\in\bb\den{\varphi}$. Then $\tup{\psi}^{n+1}\in\den{\varphi}$. If $\psi\in T^\bu$, then $\tup{\psi}^{n}\in\den{\bb\varphi}$. Suppose $\psi\in T^\D$. Then $\msf{G}\vdash \tup{\psi}^{n+1}\seq_{T^\circ}\varphi$. By $({\seq}\bb)$, $\msf{G}\vdash \tup{\psi}^{n}\seq_{T^\circ}\bb\varphi$. Then $\tup{\psi}^{n}\in\den{\bb\varphi}$. Hence $\bb\den{\varphi}\sub\den{\bb\varphi}$.
Assume $\tup{\chi}^{n}\in \den{\bb\varphi}$. If $\chi\in T^\bu$, then $\tup{\chi}^{n+1}\in \den{\varphi}$. Suppose $\chi\in T^\D$. Then $\msf{G}\vdash \tup{\chi}^n \seq_{T^\circ} \bb\varphi$. Clearly $\msf{G}\vdash \tup{\bb\varphi} \seq_{T^\circ} \varphi$. By $(Cut)$, $\msf{G}\vdash \tup{\chi}^{n+1} \seq_{T^\circ} \varphi$. Then $\tup{\chi}^{n+1}\in \den{\varphi}$. Then $\tup{\chi}^{n}\in \bb\den{\varphi}$. Hence $\den{\bb\varphi}\sub\bb\den{\varphi}$.
\end{proof}

\begin{lemma}\label{lem:box}
For any $\psi\in T^\circ$, there exists a formula $\varphi\in T^\circ$ with $\bb \den{\psi}=\den{\varphi}$.
\end{lemma}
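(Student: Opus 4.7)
The plan is to construct $\varphi$ as a finite disjunction of representatives from the class of formulas in $T^\circ$ that derive $\psi$ after one application of $\tup{\cdot}$, and then verify $\den{\varphi}=\bb\den{\psi}$ using the Interpolation Lemma.

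First I would unpack $\bb\den{\psi}$ directly. Since $\den{\psi}=G(\psi)\cup FS(T^\bu)$ and $T^\D\cap T^\bu=\ve$, a structure $\tup{\chi}^m$ lies in $\bb\den{\psi}$ iff either $\chi\in T^\bu$ (so automatically $\tup{\chi}^m\in FS(T^\bu)$) or $\chi\in T^\D$ and $\msf{G}\vdash\tup{\chi}^{m+1}\seq_{T^\circ}\psi$. Next, set $Y=\{\gamma\in T^\circ\mid \msf{G}\vdash\tup{\gamma}\seq_{T^\circ}\psi\}$: axiom $(\bot)$ guarantees $\bot\in Y$, and by Lemma \ref{lem:T:inclusion}(3) the quotient $|Y|_\msf{G}$ is finite. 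Picking representatives $\gamma_0,\ldots,\gamma_k$, I set $\varphi=\gamma_0\vee\cdots\vee\gamma_k\in T^\circ$, and iterated $(\vee{\seq})$ yields $\msf{G}\vdash\tup{\varphi}\seq_{T^\circ}\psi$. The inclusion $\den{\varphi}\sub\bb\den{\psi}$ is then routine: $FS(T^\bu)\sub\bb\den{\psi}$ by inspection, and for $\tup{\chi}^m\in G(\varphi)$ a single $(Cut)$ with $\tup{\varphi}\seq\psi$ at depth $n=1$ gives $\msf{G}\vdash\tup{\chi}^{m+1}\seq_{T^\circ}\psi$, i.e.\ $\tup{\chi}^m\in\bb\den{\psi}$.

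The hard direction $\bb\den{\psi}\sub\den{\varphi}$ is where the Interpolation Lemma (Lemma \ref{interpolant}) becomes essential, and I expect the main obstacle to lie there. Given $\tup{\chi}^m\in\bb\den{\psi}$ with $\chi\in T^\D$, the defining condition supplies $\msf{G}\vdash\tup{\chi}^{m+1}\seq_{T^\circ}\psi$; applying Interpolation at outer depth $n=m+1$ and inner depth $m$ produces $\gamma\in T^\circ$ with $\msf{G}\vdash\tup{\chi}^m\seq_{T^\circ}\gamma$ and $\msf{G}\vdash\tup{\gamma}\seq_{T^\circ}\psi$. Hence $\gamma\in Y$, so $\gamma\sim_\msf{G}\gamma_j$ for some $j$, and combining with $\msf{G}\vdash\gamma_j\seq_{T^\circ}\varphi$ (via iterated $({\seq}\vee)$) through $(Cut)$ yields $\msf{G}\vdash\tup{\chi}^m\seq_{T^\circ}\varphi$, placing $\tup{\chi}^m$ in $G(\varphi)\sub\den{\varphi}$. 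The crucial point is that Interpolation is precisely the syntactic device for peeling one layer of $\tup{\cdot}$ off a $T^\circ$-derivation while staying inside $T^\circ$, which is what converts the closure question for $\bb\den{\psi}$ into a finite join over the quotient $|T^\circ|_\msf{G}$; without it the candidate $\varphi$ would not obviously capture every element of $\bb\den{\psi}$.
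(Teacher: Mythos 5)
Your proof is correct and follows essentially the same route as the paper's: both build the candidate formula as a finite disjunction of representatives from a finite quotient $|Y|_{\msf{G}}$, prove the easy inclusion by a single $(Cut)$ with $\tup{\varphi}\seq\psi$, and prove the hard inclusion $\bb\den{\psi}\sub\den{\varphi}$ via Lemma \ref{interpolant}. The only (immaterial) difference is that you take $Y$ to be all $\gamma\in T^\circ$ with $\msf{G}\vdash\tup{\gamma}\seq_{T^\circ}\psi$ and invoke interpolation when verifying the inclusion, whereas the paper takes $Y$ to be exactly the interpolants produced from the elements of $\bb\den{\psi}$ with head in $T^\D$.
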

\begin{proof}
Let $\mf{X}=\{\tup{\chi}^n \in \bb\den{\psi}\mid \chi\in T^\D~\&~n\geq 0\}$. Clearly $\bot\in T^\D$ and $\tup{\bot}\in \den{\psi}$. Then $\bot\in \bb\den{\psi}$. Hence $\mf{X}\neq\ve$. Take any $\tup{\chi_i}^{n_i}\in\mf{X}$. Then $\tup{\chi_i}^{n_i+1}\in \den{\psi}$. Hence $\msf{G}\vdash\tup{\chi_i}^{n_i+1}\seq_{T^\circ}\psi$. By Lemma \ref{interpolant}, there is $\gamma_i\in T^\circ$ with $\msf{G}\vdash \tup{\chi_i}^{n_i}\seq\gamma_i$ and $\msf{G}\vdash \tup{\gamma_i} \seq\psi$.
Let $Y\sub T^\circ$ be the set of all such interpolants. Since $|T^\circ|_\msf{G}$ is finite, $|Y|_\msf{G}$ is finite. Let $\{\gamma_0,\ldots,\gamma_k\}$ be the set of all representatives selected from equivalence classes in $|Y|_\msf{G}$. Let $\xi=\gamma_0\vee\ldots\vee\gamma_k$. 
By $({\seq}\vee)$ and $(\vee{\seq})$, (i) $\msf{G}\vdash \tup{\chi_i}^{n_i}\seq\xi$ and (ii) $\msf{G}\vdash \tup{\xi} \seq\psi$. Then $\mf{X}\sub \den{\xi}$. Clearly $FS(T^\bu)\sub \den{\xi}$.
Hence $\bb\den{\psi}\sub \den{\xi}$.
Take any $\tup{\alpha}^k\in \den{\xi}$. Then $\tup{\alpha}^{k+1}\in \D\den{\xi}$. Suppose $\alpha\in T^\bu$. Then $\tup{\alpha}^{k+1}\in \den{\psi}$. Hence $\tup{\alpha}^{k}\in \bb\den{\psi}$.
Suppose $\alpha\in T^\D$. Then (iii) $\msf{G}\vdash \tup{\alpha}^k\seq_{T^\circ}\xi$. By applying $(Cut)$ to (ii) and (iii), $\msf{G}\vdash \tup{\alpha}^{k+1}\seq_{T^\circ}\psi$. Then $\tup{\alpha}^{k+1}\in\den{\psi}$. Then $\tup{\alpha}^{k}\in \bb\den{\psi}$. Hence $\den{\xi}\sub \bb\den{\psi}$. Therefore $\bb\den{\psi}= \den{\xi}$.
\end{proof}

By Lemma \ref{lem:C1} $(2)$ and Lemma \ref{lem:box}, the operations $\D_c$ and $\bb$ are unary operations on $\den{T^\circ}$. Now we get the algebra $\mf{A}_T = (\den{T^\circ}, \wedge^\circ, \vee^\circ, \neg^\circ, \bot^\circ, \top^\circ, \D_c, \bb)$.

\begin{lemma}\label{lem:finite}
$\mf{A}_T$ is a finite quasi-transitive tense algebra.
\end{lemma}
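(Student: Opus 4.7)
The plan is to check each piece of the definition of a finite quasi-transitive tense algebra against the lemmas already built up. Finiteness of the carrier $\den{T^\circ}$ is immediate from Lemma \ref{lem:T:inclusion}(3). The Boolean skeleton $(\den{T^\circ}, \wedge^\circ, \vee^\circ, \neg^\circ, \bot^\circ, \top^\circ)$ is exactly what Lemma \ref{lem:boole} delivers, so no new work is needed for the propositional part. I would then note that $\D_c$ and $\bb$ really are operations on $\den{T^\circ}$: for $\D_c\den{\varphi}=C(\D\den{\varphi})$ this is Lemma \ref{lem:C1}(2), while for $\bb$ it is precisely the content of Lemma \ref{lem:box}. (This last point is already remarked just before the statement, so it can be referenced briefly.)

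Next I would verify the adjunction (Adj) between $\D_c$ and $\bb$ on $\den{T^\circ}$. This is essentially a restatement of Corollary \ref{cor:23}, which reads ``$\D_c\den{\varphi}\leq^\circ \den{\psi}$ iff $\den{\varphi}\leq^\circ \bb\den{\psi}$''. The corollary is already obtained by combining the closure-level adjunction from Lemma \ref{lem:adj} with Lemma \ref{lem:C1}(6), so this step is simply a citation.

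Finally, I would establish the defining inequality $\D_c^3 a \leq^\circ \D_c^2 a$ for all $a\in\den{T^\circ}$. Fixing $a=\den{\varphi}$, Lemma \ref{lem:d23} gives $\D_c^3\den{\varphi} \sub \D_c^2\den{\varphi}$, and Lemma \ref{lem:T:inclusion}(4) tells us that the algebraic order $\leq^\circ$ coincides with set-theoretic inclusion on $\den{T^\circ}$, so this inclusion is exactly the required inequality.

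At this stage there is no serious obstacle left: the genuinely nontrivial content --- dealing with the non-transitive structural operator $\tup{\cdot}$, the axiom $\D^3\varphi\seq\D^2\varphi$, and the need to produce suitable formula witnesses --- has already been absorbed into the interpolation Lemma \ref{interpolant}, the reduction Lemma \ref{rule:R23}, and the chain of closure results in Lemmas \ref{lem:diamond}, \ref{lem:d23} and \ref{lem:adj}. The proof of the present lemma is therefore an assembly step that simply invokes these results in sequence; I expect it to occupy only a few lines in the final write-up.
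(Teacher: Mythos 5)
Your proposal is correct and follows essentially the same route as the paper's own proof, which likewise assembles Lemma \ref{lem:boole} for the Boolean reduct, Corollary \ref{cor:23} for the adjunction, and Lemma \ref{lem:d23} for quasi-transitivity. The only difference is that you make explicit the well-definedness of $\D_c$ and $\bb$ as operations on $\den{T^\circ}$ and the identification of $\leq^\circ$ with $\sub$, which the paper leaves to the surrounding remarks.
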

\begin{proof}
By Lemma \ref{lem:boole}, the $(\wedge^\circ, \vee^\circ, \neg^\circ, \bot^\circ, \top^\circ)$-reduct of $\mf{A}_T$ is a finite Boolean algebra. 
By Corollary \ref{cor:23}, $\mf{A}_T$ is a tense algebra. By Lemma \ref{lem:d23}, $\D_c^3\den{\varphi}\leq^\circ \D_c^2\den{\varphi}$
for any $\den{\varphi}\in \den{T^\circ}$. Then
$\mf{A}_T$ is quasi-transitive.
\end{proof}

\begin{theorem}\label{thm:G:fmp}
$\msf{G}$ has the FAP.
\end{theorem}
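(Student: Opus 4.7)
The plan is to refute $\Gamma\seq\psi$ in the algebra $\mf{A}_T$ constructed above, for an appropriately chosen $T$. Writing $\Gamma = \tup{\varphi}^n$ (every formula structure has this form by Definition \ref{sequent}), I take $T$ to be the union of $\{\top,\bot\}$ with all subformulas of $\varphi$ and $\psi$. Then $T$ is finite, so by Lemma \ref{lem:finite} the algebra $\mf{A}_T$ is a finite quasi-transitive tense algebra. I define an assignment $\theta$ in $\mf{A}_T$ by $\theta(p) = \den{p}$ for each propositional variable $p$ occurring in $\varphi$ or $\psi$, and $\theta(p) = \den{\bot}$ otherwise.

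The first key step is a truth lemma: by induction on complexity, $\w{\theta}(\chi) = \den{\chi}$ for every subformula $\chi$ of $\varphi$ or $\psi$. The Boolean clauses are immediate from the definitions of $\wedge^\circ,\vee^\circ,\neg^\circ$, while the modal clauses use Lemma \ref{lem:D:bb}, whose hypotheses are met because closure of $T$ under subformulas guarantees that $\D\chi'$ (respectively $\bb\chi'$) lies in $T\subseteq T^\circ$ whenever it is a subformula. Extending $\w{\theta}$ homomorphically and using $\w{\theta}(\varphi) = \den{\varphi}$ then yield $\w{\theta}(\tau(\Gamma)) = \w{\theta}(\D^n\varphi) = \D_c^n\den{\varphi}$ and $\w{\theta}(\psi) = \den{\psi}$.

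To refute the sequent it suffices, by Lemma \ref{lem:T:inclusion} (4), to exhibit an element of $\D_c^n\den{\varphi}$ not in $\den{\psi}$; the natural candidate is the formula structure $\tup{\varphi}^n$ itself. A direct induction on $n$ shows $\tup{\varphi}^n \in \D^n\den{\varphi}$, starting from $\varphi \in G(\varphi) \subseteq \den{\varphi}$ (which uses $\varphi \in T \subseteq T^\D$); and a further induction combining Lemma \ref{lem:diamond} (1) with Lemma \ref{lem:C1} (3) yields $\D^n X \subseteq \D_c^n X$ for every $X$, so $\tup{\varphi}^n \in \D_c^n\den{\varphi}$. On the other hand, $\msf{G}\not\vdash\tup{\varphi}^n\seq\psi$ trivially implies $\msf{G}\not\vdash\tup{\varphi}^n\seq_{T^\circ}\psi$, so $\tup{\varphi}^n \notin G(\psi)$; and since $\varphi \in T^\D$ also gives $\tup{\varphi}^n \notin FS(T^\bu)$, we obtain $\tup{\varphi}^n \notin \den{\psi}$. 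The only delicate point is the simultaneous membership/non-membership of the witness $\tup{\varphi}^n$, but this rests only on the inclusion $\D^n X \subseteq \D_c^n X$ and on the strictness of $T^\circ$-derivability, so no serious obstacle remains at this stage — the hard work (the interpolation Lemma \ref{interpolant} and Lemma \ref{lem:diamond} (4) underpinning the quasi-transitivity of $\mf{A}_T$) has already been done.
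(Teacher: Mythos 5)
Your proof is correct and follows the same overall strategy as the paper (refute the sequent in $\mf{A}_T$ via the truth lemma $\w{\theta}(\xi)=\den{\xi}$ and a concrete witness), but it executes the final step slightly differently. The paper first replaces $\tup{\varphi}^n\seq\psi$ by the formula sequent $\D^n\varphi\seq\psi$ (using $\msf{G}\vdash\tup{\varphi}^n\seq\D^n\varphi$ and $(Cut)$ to transfer non-derivability), puts $\D^n\varphi$ into $T$, and then uses the bracket-free witness $\chi=\D^n\varphi=\tup{\D^n\varphi}^0\in\den{\chi}\setminus\den{\psi}$; you instead keep $T=Sub(\varphi)\cup Sub(\psi)\cup\{\top,\bot\}$ smaller and use the structural witness $\tup{\varphi}^n$, which costs you the extra (easy) inclusion $\D^nX\sub\D_c^nX$ from Lemma \ref{lem:diamond}(1) and Lemma \ref{lem:C1}(3), but saves the preliminary reduction. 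All the steps check out: $\tup{\varphi}^n\in\D^n\den{\varphi}$ since $\varphi\in T\sub T^\D$ gives $\varphi\in G(\varphi)$, and $\tup{\varphi}^n\notin\den{\psi}$ since $T^\circ$-derivability is a restriction of derivability and $\tup{\varphi}^n\notin FS(T^\bu)$. Neither route buys much over the other; yours is marginally more economical in the choice of $T$, while the paper's avoids reasoning about iterated $\D_c$ altogether.
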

\begin{proof}
Assume $\msf{G}\not\vdash\tup{\varphi}^n\seq\psi$.
Clearly $\msf{G}\vdash\tup{\varphi}^n\seq\D^n\varphi$.
Then $\msf{G}\not\vdash\D^n\varphi\seq\psi$.
Let $\chi=\D^n\varphi$ and $T = Sub(\chi)\cup Sub(\psi)\cup\{\top,\bot\}$. Then $\msf{G}\not\vdash\chi\seq_{T^\circ}\psi$.
Let $\theta_T$ be the assignment in $\mf{A}_T$ with $\theta_T(p)=\den{p}$ for every propositional variable $p$.
By the definition of $\mf{A}_T$ and Lemma \ref{lem:D:bb}, one can easily show by induction on the complexity of a formula $\xi\in T^\circ$ that $\w{\theta_T}(\xi)=\den{\xi}$. Since $\chi\in T$, we have $\chi\in T^\D$. By $\msf{G}\vdash\chi\seq_{T^\circ}\chi$, we have $\chi\in\den{\chi}$. By $\msf{G}\not\vdash\chi\seq_{T^\circ}\psi$, we have $\chi\not\in \den{\psi}$. Then $\den{\chi}\not\sub\den{\psi}$.
Hence $\mf{A}_T\not\models\chi\seq\psi$, i.e., $\mf{A}_T\not\models\tup{\varphi}^n\seq\psi$. By Lemma \ref{lem:finite},
$\mf{A}_T$ is a finite quasi-transitive tense algebra.
\end{proof}

Finally, by the algebraic completeness of tense logic $\mathsf{Kt}_2^3=\msf{Kt}\oplus \D^3 p\imp \D^2 p$ (cf.~\cite{BDV2001,Yde2007}),  for any tense formula $\varphi$, $\msf{Kt}_2^3\vdash\varphi$ if and only if $\msf{G}\vdash\top\seq\varphi$. Using duality between tense algebras and bidirectional frames (cf.~\cite{BDV2001,Yde2007}), by Theorem \ref{thm:G:fmp}, one can get the FMP of $\mathsf{Kt}_2^3$, i.e., if $\msf{Kt}_2^3\not\vdash\varphi$, then $\varphi$ is refuted by the dual frame of the algebra $\mf{A}_T$ where $T$ is the set of tense formulas $Sub(\varphi)\cup\{\top,\bot\}$. Finally, since $\mathsf{Kt}_2^3$ is a conservative extension of $\mathsf{K}_2^3=\msf{K}\oplus \D^3 p\imp \D^2 p$, we obtain the FMP of $\msf{K}_2^3$.

\begin{corollary}\label{thm:fmp}
$\mathsf{Kt}_2^3$ and $\mathsf{K}_2^3$ have the FMP and hence are decidable.
\end{corollary}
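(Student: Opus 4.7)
The plan is to chain three reductions. First, I would convert the FAP of $\msf{G}$ (Theorem \ref{thm:G:fmp}) into the FMP of $\msf{Kt}_2^3$ using algebraic completeness together with the discrete duality between tense algebras and bidirectional Kripke frames. I would then transfer the FMP downward from $\msf{Kt}_2^3$ to $\msf{K}_2^3$ via conservativity, and finally observe that FMP combined with recursive axiomatizability yields decidability by a standard Harrop-style argument.

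For the first step, I would combine the algebraic completeness of $\msf{Kt}_2^3$ with respect to the variety $\mbb{Q}$ of quasi-transitive tense algebras with the soundness-and-completeness theorem for $\msf{G}$ over $\mbb{Q}$ already established in Section 2. This yields $\msf{Kt}_2^3\vdash\varphi$ if and only if $\msf{G}\vdash\top\seq\varphi$. So if $\msf{Kt}_2^3\not\vdash\varphi$, Theorem \ref{thm:G:fmp} supplies a finite quasi-transitive tense algebra $\mf{A}_T$ (with $T=\mrm{Sub}(\varphi)\cup\{\top,\bot\}$) on which $\varphi$ is refuted. I would then invoke the standard discrete duality: the dual of a finite tense algebra is a finite bidirectional Kripke frame whose two accessibility relations are mutual converses, and it satisfies precisely those frame conditions which correspond to the equations holding in the algebra. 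Applied to $\mf{A}_T$, this delivers a finite $\msf{Kt}_2^3$-frame refuting $\varphi$, so $\msf{Kt}_2^3$ has the FMP.

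To descend to $\msf{K}_2^3$ I would use the conservativity of $\msf{Kt}_2^3$ over $\msf{K}_2^3$: any modal formula $\varphi\in\mc{L}_\D$ that is unprovable in $\msf{K}_2^3$ is also unprovable in $\msf{Kt}_2^3$, hence is refuted on a finite bidirectional $\msf{Kt}_2^3$-frame by the previous paragraph. Discarding the converse relation yields a finite $\msf{K}_2^3$-frame still refuting $\varphi$, giving the FMP of $\msf{K}_2^3$. Decidability of each logic is then immediate: a recursive axiomatization makes the set of theorems recursively enumerable, while FMP together with the decidability of validity on each fixed finite frame makes the set of non-theorems recursively enumerable as well.

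The main point of caution, rather than any genuine obstacle, is the conservativity of $\msf{Kt}_2^3$ over $\msf{K}_2^3$. This is folklore for $\msf{Kt}$ over $\msf{K}$, and is preserved here because the added axiom $\D^3 p\imp \D^2 p$ mentions only the forward modality; semantically, any $\msf{K}_2^3$-frame can be enriched by the converse of its accessibility relation to yield a $\msf{Kt}_2^3$-frame validating exactly the same $\mc{L}_\D$-formulas. All remaining ingredients are routine invocations of the algebraic and dual semantics of tense logic.
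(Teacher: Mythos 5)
Your proposal is correct and follows essentially the same route as the paper: algebraic completeness of $\msf{Kt}_2^3$ plus the soundness/completeness of $\msf{G}$, then the FAP of $\msf{G}$ (Theorem \ref{thm:G:fmp}) transferred through the duality with bidirectional frames to get the FMP of $\msf{Kt}_2^3$, and finally conservativity to descend to $\msf{K}_2^3$, with decidability by the standard enumeration argument. Your added justification of conservativity (expanding a $\msf{K}_2^3$-frame with the converse relation) is a reasonable elaboration of a step the paper simply cites as known.
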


\section{Concluding remarks}
We established the finite model property of the quasi-transitive modal logic $\msf{K}\oplus \D\D\D p\imp\D\D p$ by showing the finite model property of its conservative tense extension $\mathsf{Kt}_2^3$. In the sequent calculus $\msf{G}$ for the tense logic $\mathsf{Kt}_2^3$, for each sequent which is not derivable in $\msf{G}$ there exists a finite syntactic algebraic model that refutes the sequent. We can extend the method in the present work to show the FMP of logics $\msf{K}\oplus \B^n p\imp \B^m p$ for $n\neq m\geq 0$. Furthermore, we may extend the method to show the FMP of non-classical modal logics. For example, the finite model property of some intuitionistic modal logics and lattice-based modal logics can be proved.

\section*{Acknowledgements}
The first author was supported by Chinese National Funding of Social Sciences (No. 17CZX048). The second author was
supported by Guangdong Province Higher Vocational Colleges $\&$ Schools Pearl River Scholar Funded Scheme (2017-2019).
Thanks are given to the reviewers' insightful and helpful comments on the revision of this paper. In particular, the first reviewer mentioned some proof-theoretic points of the sequent calculus in the first version. The second reviewer pointed out the possibility of a shorter proof of the main result.

\bibliographystyle{plain}

\end{document}